\titlespacing{\section}{0pt}{1\baselineskip}{0.5\baselineskip}
\titlespacing{\subsection}{0pt}{0.44 \baselineskip}{0.3\baselineskip}
\begin{document}
	\setlength{\abovedisplayskip}{5pt}
	\setlength{\belowdisplayskip}{5pt}
	\setlength{\abovedisplayshortskip}{5pt}
	\setlength{\belowdisplayshortskip}{5pt}

	\title{Overcoming BS Down-Tilt for Air-Ground ISAC Coverage: Antenna Design, Beamforming and User Scheduling}
		
		\author{Lingyi Zhu,~\IEEEmembership{Student Member,~IEEE}, Zhongxiang Wei,~\IEEEmembership{Senior Member,~IEEE}, Fan Liu,~\IEEEmembership{Senior Member,~IEEE}, Jianjun Wu,~\IEEEmembership{Member,~IEEE}, Xiao-Wei Tang,~\IEEEmembership{Member,~IEEE}, Christos Masouros,~\IEEEmembership{Fellow,~IEEE}, \\and Shanpu Shen,~\IEEEmembership{Senior Member,~IEEE}

		\thanks{Lingyi Zhu, Zhongxiang Wei and Xiao-Wei Tang are with the College of Electronics and Information Engineering, Tongji University, Shanghai 201804, China (e-mail: {2431933@tongji.edu.cn}; {z\_wei@tongji.edu.cn}; {xwtang}@tongji.edu.cn). }
		
		\thanks{F. Liu is with the National Mobile Communications Research Laboratory, Southeast University, China (e-mail: {f.liu}@ieee.org).}
		
		\thanks{Jianjun Wu is with the Wireless	Communications Lab, Asmote technologies Co., Ltd., Shanghai, China (e-mail: {jjun}\_wu08@126.com).}
		
		\thanks{C. Masouros is with the Department of Electronic and Electrical Engineering, University College London, London WC1E 7JE, U.K. (e-mail: {c.masouros}@ucl.ac.uk).}
		
		\thanks{S. Shen is with the State Key Laboratory of Internet of Things for Smart City and Department of Electrical and Computer Engineering, University of Macau, Macau, China (email: {shanpushen}@um.edu.mo).}
		\vspace{-1pt} 
	}

	\maketitle
	
	\begin{abstract}
		Integrated sensing and communication holds great promise for low-altitude economy applications. However, conventional downtilted base stations primarily provide sectorized forward lobes for ground services, failing to sense air targets due to backward blind zones. In this paper, a novel antenna structure is proposed to enable air-ground beam steering, facilitating simultaneous full-space sensing and communication (S\&C). Specifically, instead of inserting a reflector behind the antenna array for backlobe mitigation, an omni-steering plate is introduced to collaborate with the active array for omnidirectional beamforming. Building on this hardware innovation, sum S\&C mutual information (MI) is maximized, jointly optimizing user scheduling, passive coefficients of the omni-steering plate, and beamforming of the active array. The problem is decomposed into two subproblems: one for optimizing passive coefficients via Riemannian gradient on the manifold, and the other for optimizing user scheduling and active array beamforming. Exploiting relationships among S\&C MI, data decoding MMSE, and parameter estimation MMSE, the original subproblem is equivalently transformed into a sum weighted MMSE problem, rigorously established via the Lagrangian and first-order optimality conditions. Simulations show that the proposed algorithm outperforms baselines in sum-MI and MSE, while providing $360^{\circ}$ sensing coverage. Beampattern analysis further demonstrates effective user scheduling and accurate target alignment.
	\end{abstract}

	\begin{IEEEkeywords}
		Air-ground OmniSteering antenna, User scheduling, Integrated sensing and communication
	\end{IEEEkeywords}

	\section{Introduction} \label{sec:intro}
	With the ever-growing demand for spectrum  efficiency, integrated sensing and communication (ISAC) has drawn significant attention for its ability to unify radar sensing and wireless communication systems \cite{ref1,ref2,ref3}. By leveraging advanced multiple-input and multiple-output (MIMO) architectures, ISAC is expected to support high-capacity, low-latency and reliable communications while simultaneously enabling ultra-accurate sensing, thus emerging as a key enabler for next-generation networks \cite{ref4,ref5,ref46}. Beyond performance improvements, ISAC also unlocks transformative opportunities across both civilian and military domains \cite{ref1,ref5,ref6}. One particularly promising application lies in the low-altitude economy \cite{ref7}, a comprehensive economic form driven by the extensive deployment of aircraft to support various services such as urban logistics, security surveillance, environmental observation and communication \cite{ref7,ref8,ref9}. The efficient operation of these airspace activities fundamentally relies on monitoring diverse air objects\cite{ref10,ref11}. Consequently, achieving ubiquitous and reliable low-altitude sensing has become an urgent and strategic requirement \cite{ref47}.

	In this context, ISAC-enabled base stations (BSs) are envisioned for dual roles: ensuring connectivity to ground users and providing sensing capabilities to detect and localize air targets \cite{ref48}. The antennas of conventional BS are designed for sectorized and forward coverage \cite{ref12,ref13,ref14}, as illustrated in Fig.~\ref{fig1}(a). In order to suppress backlobe beams and enhance forward beamforming gain, a backlobe reflector plate is typically placed behind the antenna array. Also, practically, BS antennas are often configured with a downtilt of around 10 degrees for covering ground users\cite{ref15,ref16}. A principal limitation is that sensing and communication (S\&C) coverage is inherently constrained to the front half-space. It creates a back-side blind zone, that hinders reliable tracking of targets along their full trajectories, as illustrated in Fig.~\ref{fig1}(b). Additionally, the backlobe can not be fully mitigated, where residual backlobe still causes power leakage and interference which harms sensing performance in the air \cite{ref17}. To overcome these issues, some works (e.g., \cite{ref18,ref19}) have introduced dedicated ISAC BSs for unmanned aerial vehicle (UAV) applications, featuring upward-facing antennas solely responsible for air S\&C. While such designs improve UAV-specific performance, they do not support ground communication and necessitate additional infrastructure, resulting in higher deployment costs, extra hardware and increased energy consumption. Moreover, these systems still operate within half-space service architectures, which are insufficient for comprehensive full-space monitoring. Particularly when deployed on urban rooftops, these BSs often fail to detect air targets below their field of view, creating persistent near-ground blind zones. As such, the above challenges highlight the urgent need for a more flexible antenna architecture capable of full-space beam steering.

	\begin{figure*}[!t]
		\centering
		\includegraphics[width=0.99\textwidth]{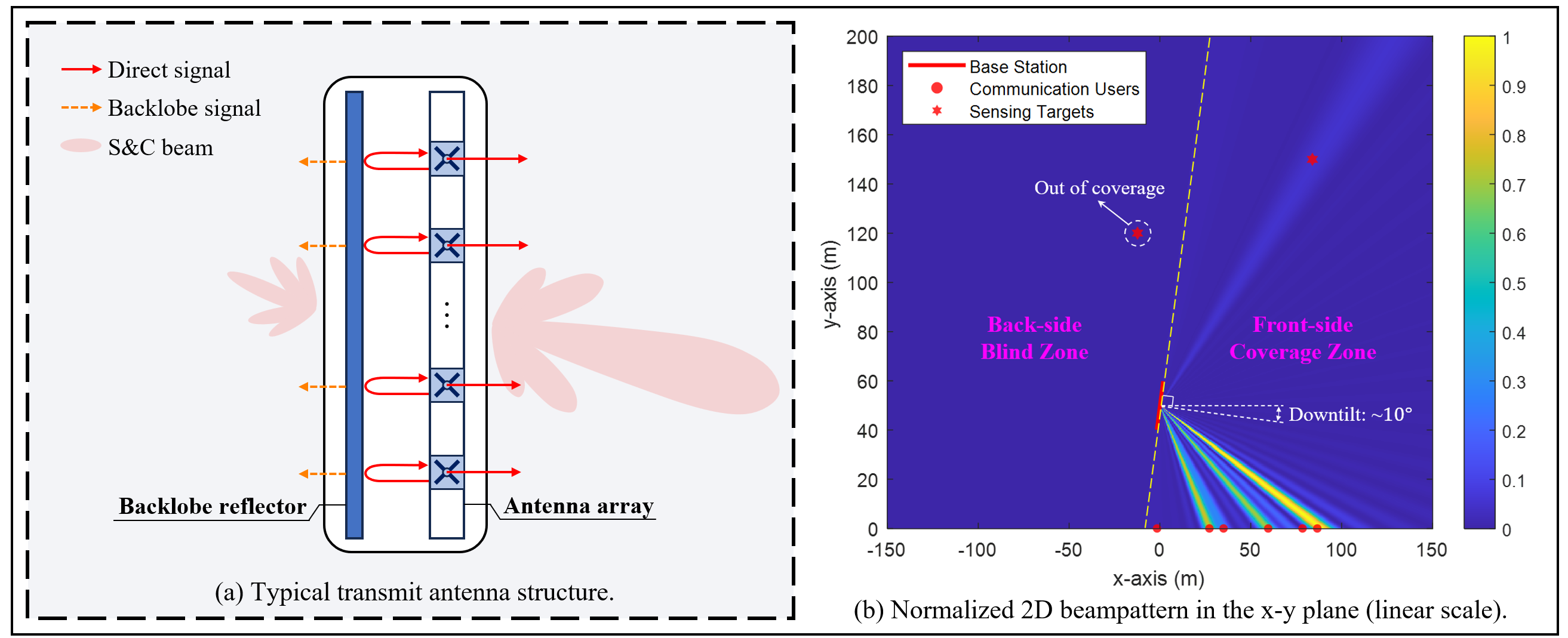} 
		\caption{
			Conventional ISAC base station antenna structure and its sensing coverage limitations.
		}
		\label{fig1}
		\vspace{-0.4cm} 
	\end{figure*}

	Apart from antenna architecture, the design of dual-functional beamforming is equally crucial for realizing efficient ISAC systems. Early work \cite{ref20} minimizes downlink multiuser interference while approximating a desired radar waveform for MU-MIMO scenarios, \cite{ref12} targeted an ideal radar beampattern subject to per-user signal-to-interference-plus-noise ratio (SINR) constraints for communications. These methods, however, rely on prior knowledge of sensing pattern and fail to directly quantify sensing performance. More recent works adopt task-dependent reliability metrics for sensing, e.g., detection probability and Cramér–Rao bound (CRB), and efficiency metrics for communication, e.g., SINR and achievable rate. \cite{ref21} proposed a design that maximizes the minimum detection probability across multiple targets under SINR constraints. \cite{ref13} minimized the angle estimation CRB under SINR constraints. Other efforts, e.g., \cite{ref22}, considered radar-specific metrics including transmit beampattern mean squared error (MSE) and signal-to-clutter-plus-noise ratio (SCNR), while maximizing spectral efficiency.

	In general, most of these works optimize either sensing or communication performance while treating the other as a constraint, often incorporating it into the objective via penalty terms and solving the resulting problem with efficient algorithms \cite{ref12,ref13,ref20,ref21,ref22}. Although this approach establishes a basic trade-off between the two functionalities, the choice of constraint thresholds and penalty weights inevitably locks the system into a fixed balance, limiting flexibility across different operating conditions. Consequently, Pareto-based optimization frameworks treat sensing and communication as a genuine multi-objective problem. The Pareto frontier captures all non-dominated operating points, providing a more systematic way to explore the trade-off between S\&C performance \cite{ref23,ref24}. Nevertheless, the heterogeneity of S\&C metrics still makes the resulting Pareto trade-offs difficult to interpret in a clear and intuitive manner. More importantly, in multi-objective optimization problems, if two objective functions exhibit inconsistencies in physical units or significant disparities in numerical magnitudes, it may result in ill-conditioning of the overall objective function (e.g., drastic gradient variations, steep function surfaces). These challenges, in turn, motivate the adoption of a unified information-theoretic metric, i.e., mutual information (MI). Specifically, communication MI corresponds to the achievable rate, while sensing MI quantifies the channel information acquired by the system \cite{ref25,ref26}. \cite{ref27,ref28,ref29} interpreted sensing as a non-cooperative joint source-channel coding process. The above works either assume a single user or that all users can be served. Nevertheless, practical scenarios involve a large number of users, which necessitates the procedure of user scheduling. This further gives rise to a mixed-integer nonlinear programming (MINLP) problem, yet for ISAC-specific scenarios, an efficient algorithm remains underdeveloped. Motivated by the above challenges, the main contributions of this work are summarized as follows:
	\begin{itemize}
		\item We first propose a novel air-ground OmniSteering antenna structure for low-altitude sensing applications. In this design, an omni-steering plate takes the place of the conventional backlobe reflector behind the antenna array. Built on an EM-transparent substrate embedded with passive elements, this plate modulates symmetric backward radiation to arbitrary directions, achieving simultaneous air-ground full-space coverage services. More importantly, by fabricating the omni-steering plate within the antenna module, rather than being deployed as a separate external component like conventional reconfigurable surfaces, the design effectively mitigates three key system drawbacks: first, it minimizes power loss that would otherwise occur when signals interact with external surfaces; second, it eliminates tedious synchronization steps, such as real-time phase matching, that often introduce latency and error risks in external deployment schemes; third, it reduces overhead of inter-device coordination and hardware cost of external mounting structures.
		\item Focusing on a practical dense communication scenario, where the number of communication users exceeds the scheduling capability of the ISAC BS, we consider a joint user scheduling and air-ground omnidirectional S\&C beamforming problem. To be specific, the communication and sensing performance are quantified by sum communication-sensing mutual information (sum-MI), which avoids inconsistencies in the physical unit of S\&C. Observing the separability nature of the sum-MI maximization problem, we judiciously decompose it into two subproblems: one for optimizing the user scheduling vector and active beamforming matrix, and the other for optimizing the passive coefficient matrix. Subsequently, a Riemannian gradient ascent (RGA) algorithm and a semidefinite relaxation (SDR) algorithm are proposed for the passive coefficient optimization subproblem, where RGA achieves comparable performance at lower complexity. For the joint user scheduling and active beamforming subproblem, we analyze the Lagrangian and first-order optimality conditions to equivalently reformulate the original sum-MI maximization problem as a sum weighted minimum mean square error (MMSE) problem, under the impact of user scheduling related variables, finally enabling closed-form updates for combiners and MSE weights, along with convex optimization for active beamforming and scheduling variables. 
		\item Numerical results validate the effectiveness of the proposed algorithms, showing consistent performance gains over baselines in both sum-MI and sum-NMSE performance. Beampattern visualizations further illustrate the system’s ability to align beams with scheduled users and target. Moreover, sensing-MI comparisons confirm that the proposed design significantly enhances coverage while ensuring stable sensing performance.
	\end{itemize}

	The rest of the paper is organized as follows. Section~\ref{sec:sys} depicts the system model. Section~\ref{sec:algorithm} formulates the optimization problem for sum-MI maximization and proposes the US-AGO algorithm to solve it. Simulation results and conclusion are drawn in section~\ref{sec:results} and section~\ref{sec:conclusion}.

	\textit{Notations:} Matrices and vectors are denoted by bold uppercase and lowercase letters, respectively. $[\cdot]^T$, $[\cdot]^H$, and $[\cdot]^*$ stand for the transpose, the Hermitian transpose, and the complex conjugate, respectively. $\mathcal{CN}$ denotes the circularly-symmetric complex Gaussian distribution. $\mathbb{C}(\cdot)$ denotes the complex space. $\mathbb{E}[\cdot]$ denotes statistical expectation. $\det(\cdot)$ and $\operatorname{tr}(\cdot)$ represent matrix determinant and trace, respectively.
	$\operatorname{diag}(\boldsymbol{s})$ is a diagonal matrix with entries from $\boldsymbol{s}$.
	$\|\cdot\|_{2}$ denotes the $\ell_2$-norm.
	$\operatorname{rank}(\cdot)$ denotes matrix rank.
	$\Re(\cdot)$ extracts the real part of a complex number.
	$\odot$ denotes the Hadamard product.

	\section{System Model}\label{sec:sys}
	We consider a monostatic MIMO ISAC system consisting of a BS, $K$ single-antenna communication users (CUs), and a point-like sensing target. To reflect practical deployment scenarios, we focus on a dense communication environment where the number of users exceeds the BS’s scheduling capability, i.e., $K>N_{S}$ with $N_{S}$ denoting the maximum number of scheduled users.
	\begin{figure}[ht]
		\centering
		\includegraphics[width=0.98\columnwidth]{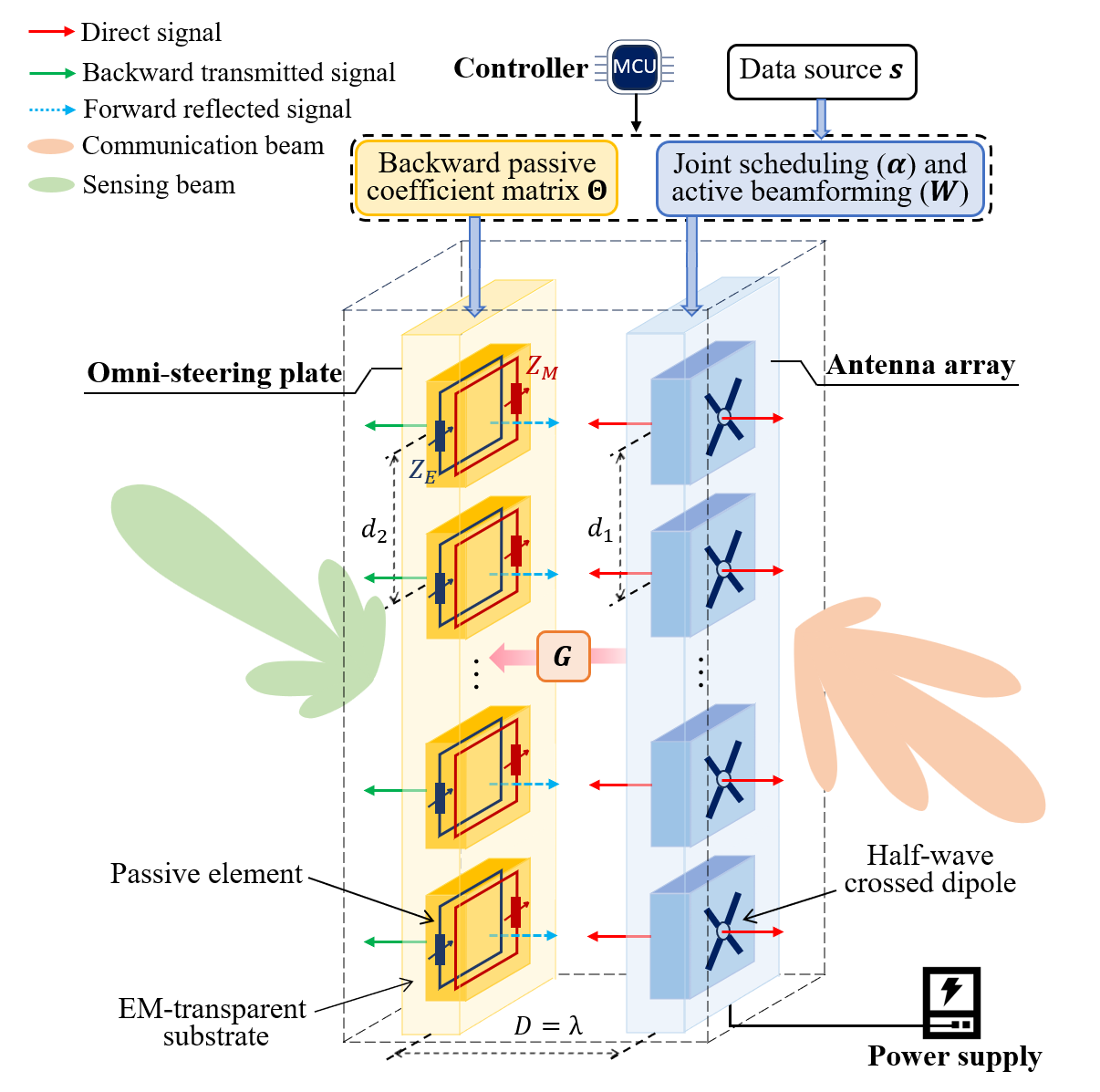} 
		\caption{
			Illustration of the air-ground OmniSteering antenna structure.
		}
		\label{fig2}
		\vspace{-0.5cm} 
	\end{figure}
	
	\vspace{-0.05cm}
	\subsection{Air-Ground OmniSteering Antenna Design} \label{subsec:Mode}
	\textit{(a) Omni-Steering Plate:} To address the beam steering limitations of traditional BS antennas, we propose an novel antenna architecture tailored for air-ground ISAC services. As shown in Fig.~\ref{fig2}, the antenna array, consisting of $N_{T}$ half-wave crossed dipoles, serves as the primary radiating component, directly transmitting signals into the surroundings. 
	
	Importantly, an omni-steering plate replaces the traditional backlobe reflector located behind the antenna array, and operates in a passive manner to further manipulate the symmetric backward signals from the antenna array. In details, the omni-steering plate is constructed on an electromagnetically (EM) transparent substrate embedded with $M$ passive elements. The substrate enables efficient transmission of EM waves with minimal attenuation and can be realized using low-loss dielectric materials, e.g., Rogers RT/Duroid 5880 (relative permittivity = 2.2, tangent loss = 0.0009) \cite{ref31}. Each passive element comprises a parallel LC resonant circuit with metallic loops \cite{ref32}, providing the desired surface electric impedance $Z_{E}$ and magnetic impedance $Z_{M}$. Based on the electromagnetic equivalence principle \cite{ref41}, these impedances determine the surface electric and magnetic currents excited by the incident wave, which satisfy the field boundary conditions and allow arbitrary EM fields on both sides of the surface. By independently tuning $(Z_{E}, Z_{M})$ through integrated varactors under different bias voltages, the plate can dynamically control the transmission and reflection coefficients of incident EM waves, thereby enabling flexible beam reconfiguration.

    \textit{(b) Array-to-Plate Channel:}
	For analytical tractability, we assume that both the antenna array and the omni-steering plate are configured as uniform linear arrays (ULAs) with aligned elements, i.e., $M = N_{T}$. The inter-element spacing of both is set to $d_{1}=d_{2}=d=\lambda/2$, where $\lambda$ denotes the carrier wavelength. The plate is placed at a distance of $D=\lambda$ behind the antenna array, which is less than the Rayleigh distance \cite{ref42}. Then, the near-field channel $\boldsymbol{G}\in\mathbb{C}^{M\times N_T}$ from the active array to the passive omni-steering plate is modeled using a spherical wave-based line-of-sight (LOS) channel model \cite{ref44,ref45}. Specifically, the $(m,n)$-th element of $\boldsymbol{G}$ is
	\begin{equation}
		[\boldsymbol{G}]_{m,n}=\frac{\lambda}{4\pi \sqrt{D^2+((m-n)d)^2}}e^{-i2\pi\sqrt{D^2+((m-n)d)^2}/\lambda}.\label{eq:3}
	\end{equation}

	\textit{(c) System Control and Power Supply:} Building on the structural design above, the hardware control of the air-ground OmniSteering antenna system is realized through a microcontroller unit (MCU) based controller \cite{ref43}, which coordinates the active antenna array and the passive omni-steering plate. The antenna array, powered by an external supply, radiates forward signals to form high-gain communication beams directed toward ground users. Meanwhile, the omni-steering plate reconfigures the symmetric backward signals through controlled transmission or reflection, steering sensing beams toward air targets.	
	
	Next, we characterize the signal manipulation performed by each passive element of the omni-steering plate in terms of its transmission and reflection coefficients. Denote the transmission and reflection coefficients of the $m$-th passive element as $\theta_m^t=\sqrt{\beta_m^t}e^{j\varphi_m^t}$ and $\theta_m^r=\sqrt{\beta_m^r}e^{j\varphi_m^r}$, respectively, where $\sqrt{\beta_{m}^{t}}$, $\sqrt{\beta_{m}^{r}}\in[0{,}1]$ represent the amplitude and $\varphi_{m}^{t}$, $\varphi_{m}^{r}\in[0{,}2\pi)$ represent the phase shift of transmission and reflection, respectively. In this work, all elements are configured either in a full transmission mode (i.e., $\beta_m^t=1$, $\beta_m^r=0, \forall m\in\mathcal{M}$, referred to as T mode) or a full reflection mode (i.e., $\beta_m^t=0$, $\beta_m^r=1, \forall m\in\mathcal{M}$, referred to as R mode). Accordingly, the coefficient vector $\boldsymbol{\theta}$ is given by
	\begin{equation}\boldsymbol{\theta}=
		\begin{cases}
			[\theta_1^t,\theta_2^t,\cdots,\theta_M^t]^T=\left[e^{j\varphi_1^t},e^{j\varphi_2^t},\cdots,e^{j\varphi_M^t}\right]^T,\text{T mode} \\
			[\theta_1^r,\theta_2^r,\cdots,\theta_M^r]
			^T=\left[e^{j\varphi_1^r},e^{j\varphi_2^r},\cdots,e^{j\varphi_M^r}\right]^T,\text{R mode} 
			\label{eq:1}
		\end{cases}
	\end{equation}
	Here, $\boldsymbol{\theta}$ is also referred to as the backward passive coefficient vector in the following discussion.
	
	Note that the above description focuses on transmit antenna structure, while the BS is also equipped with a receive antenna for uplink reception and echo collection. This array is modeled as a ULA with $N_{R}$ elements and $\lambda/2$ inter-element spacing.

	\subsection{S\&C Channel Model}
	We use the Saleh-Valenzuela model to characterize the MIMO downlink channel between the BS and each CU. This model captures the clustered and sparse nature of high-frequency wireless channels, comprising $N_{cl}$ scattering clusters, each containing $N_{ray}$ multipath rays. Let $\boldsymbol{h}_{C,k}^H\in\mathbb{C}^{1\times N_T}$ denote the downlink channel from the BS to the $k$-th CU
	\begin{equation}
		\boldsymbol{h}_{C,k}^H=\sqrt{\frac{N_T}{N_{cl}N_{ray}}}\sum_{c=1}^{N_{cl}}\sum_{l=1}^{N_{ray}}\xi_{c,l}\boldsymbol{a}_T^H(\psi_{c,l}),\label{eq:2}
	\end{equation}
	where $\xi_{c,l}$ is the complex gain of the $l$-th ray in the $c$-th cluster, assumed to follow an i.i.d. complex Gaussian distribution. $\boldsymbol{a}_T(\psi_{c,l})$ represents the normalized transmit array response vector at the BS corresponding to the angle of departure (AoD) $\psi_{c,l}$. Since each CU is equipped with a single antenna, the receive array response reduces to a scalar and is omitted here.
	
	In addition to the communication channel, we model the sensing channel associated with target reflection. Specifically, the target response matrix to be estimated is denoted as $\boldsymbol{H}_{S}\in\mathbb{C}^{N_{R}\times M}$ and can be expressed as
	\begin{equation}
		\boldsymbol{H}_{S} = \xi\boldsymbol{b}_{R}(\psi)\boldsymbol{b}_{T}^H(\psi),
	\end{equation}
	where $\xi$ represents the complex reflection coefficient of the target, while $\boldsymbol{b}_{T}(\psi)\in\mathbb{C}^{M\times 1}$ and $\boldsymbol{b}_{R}(\psi)\in\mathbb{C}^{N_{R}\times 1}$ represent the transmit and receive steering vectors corresponding to the angle $\psi$, respectively. Without loss of generality, we assume that the columns of $\boldsymbol{H}_{S}^{H}$ are i.i.d. and each follows the distribution $\mathcal{CN}(\boldsymbol{0},\boldsymbol{R}_H)$.

	\vspace{-0.1cm}
	\subsection{Signal Model}
	Denote $\boldsymbol{s}=[s_{1},s_{2},\cdots,s_{K}]^{T}\in\mathbb{C}^{K\times1}$ as the ISAC signal vector, where $s_{k}$ is the data symbol desired by the $k$-th CU. Each entry in $\boldsymbol{s}$ is assumed to be statistically orthogonal to each other, so we have $\mathbb{E}[\boldsymbol{s}\boldsymbol{s}^H]=\boldsymbol{I}_K$. Denote $\boldsymbol{\alpha}=[\alpha_{1},\alpha_{2},\cdots,\alpha_{K}]^{T}\in\mathbb{C}^{K\times1}$ as the user scheduling vector, where $\alpha_{k}=\{0{,}1\}$ is the binary variable indicating whether the $k$-th CU is scheduled ($\alpha_{k}=1$) or not ($\alpha_{k}=0$). The active beamforming matrix at the transmit antenna array is denoted by $\boldsymbol{W}=[\boldsymbol{w}_{1},\boldsymbol{w}_{2},\cdots,\boldsymbol{w}_{K}]\in$ $\mathbb{C}^{N_T\times K}$, where $\boldsymbol{w}_k\in\mathbb{C}^{N_T\times1}$ represents the active beamforming vector corresponding to CU $k$. Then, the received signal at CU $k$ can be expressed as
	\begin{equation}
		y_{C,k}=\underbrace{\frac{1}{2}\alpha_{k}\boldsymbol{h}_{C,k}^{H}\boldsymbol{w}_{k}s_{k}}_{\text{desired signal of user }k}+\underbrace{\frac{1}{2}\sum_{j\neq k}^{K}\alpha_{j}\boldsymbol{h}_{C,k}^{H}\boldsymbol{w}_{j}s_{j}}_{\text{multi-user interference}}+z_{C,k},
		\label{eq:4}
	\end{equation}
	where $z_{C,k}\sim\mathcal{CN}(0,\sigma_{C}^{2})$ is additive white Gaussian noise. Note that the factor $\frac{1}{2}$ reflects the power split between the forward and backward lobes. By collecting the received signals from all users, \eqref{eq:4} can be  written in a compact form as
	\begin{equation}
		\boldsymbol{y}_{C}=\frac{1}{2}\boldsymbol{H}_{C}\boldsymbol{W}\mathrm{diag}(\boldsymbol{\alpha})\boldsymbol{s}+\boldsymbol{z}_{C},
		\label{eq:5}
	\end{equation}
	where $\boldsymbol{y}_{C}=\left[y_{C,1},y_{C,2},\cdots,y_{C,K}\right]^T\in\mathbb{C}^{K\times1}$, $\boldsymbol{H}_{C}=\left[\boldsymbol{h}_{C,1},\boldsymbol{h}_{C,2},\cdots,\boldsymbol{h}_{C,K}\right]^H\in\mathbb{C}^{K\times N_T}$, and $\boldsymbol{z}_{C}=\left[z_{C,1},z_{C,2},\cdots,z_{C,K}\right]^T\in\mathbb{C}^{K\times1}$.
	
	Meanwhile, the antenna array also generates symmetric backward signals that arrive at the omni-steering plate, which can be represented as $\frac{1}{2} \boldsymbol{G}\boldsymbol{W}\mathrm{diag}(\boldsymbol{\alpha})\boldsymbol{s}$, and are subsequently reconfigured by the plate for air sensing. The echoed sensing signal received at the BS can then be modeled as
	\begin{equation}
		\boldsymbol{y}_{S}=\frac{1}{2}\boldsymbol{H}_{S}\boldsymbol{\Theta} \boldsymbol{G}\boldsymbol{W}\mathrm{diag}(\boldsymbol{\alpha})\boldsymbol{s}+\boldsymbol{z}_{S},
		\label{eq:6}
	\end{equation}
	where $\boldsymbol{\Theta}=\mathrm{diag}(\boldsymbol{\theta})\in\mathbb{C}^{M\times M}$ is the backward passive coefficient matrix, and $\boldsymbol{z}_{S}\sim\mathcal{CN}(\mathbf{0},\sigma_{S}^{2}\boldsymbol{I}_{N_{R}})$ is additive white Gaussian noise. For simplicity, we define an equivalent sensing beamformer as $\boldsymbol{V}_S\triangleq\boldsymbol{\Theta}\boldsymbol{G}\boldsymbol{W}\mathrm{diag}(\boldsymbol{\alpha})\in\mathbb{C}^{M\times K}$. By applying conjugate transposition, the sensing model in \eqref{eq:6} can be rewritten as
	\begin{equation}
		\boldsymbol{y}_{S}^{H}=\frac{1}{2}\boldsymbol{s}^{H}\boldsymbol{V}_{S}^{H}\boldsymbol{H}_{S}^{H}+\boldsymbol{z}_{S}^{H}.
		\label{eq:7}
	\end{equation}
	
	By comparing this equivalent form with the communication model in \eqref{eq:5}, we obtain an intuitive interpretation of the sensing process. It can be viewed as a virtual uplink in which the target acts as a “transmitter”, delivering the reflected signal $\boldsymbol{H}_{S}^{H}$ back to the BS. In this view, $\boldsymbol{V}_{S}^{H}$ serves as a precoding matrix, while the communication symbol vector $\boldsymbol{s}^{H}$ functions as a known “channel” at BS. This interpretation reveals a duality between communication and sensing.

	\section{Joint User Scheduling and Air-Ground Omnidirectional Sensing and Communication Algorithm Design} \label{sec:algorithm}
	In this section, we first formulate a unified optimization problem that aims to maximize the sum-MI, encompassing both communication and sensing performance. We then propose a comprehensive algorithm to jointly optimize the associated variables. 
	
	\subsection{Problem Formulation}
	The communication MI is defined as the MI between the transmitted symbol $s_{k}$ and received signal $y_{C,k}$ given the knowledge of the channel $\boldsymbol{h}_{C,k}^H$. Based on \eqref{eq:4}, the MI of the $k$-th CU can be expressed as
	\begin{equation}\begin{aligned}
			I_{C,k} & =I\left(y_{C,k};s_{k}|\boldsymbol{h}_{C,k}^{H}\right) \\
			& =\log_2\left(1+\frac{\left|\alpha_k\boldsymbol{h}_{C,k}^H\boldsymbol{w}_k\right|^2}{\sum_{j=1,j\neq k}^K\left|\alpha_j\boldsymbol{h}_{C,k}^H\boldsymbol{w}_j\right|^2+4\sigma_C^2}\right).
			\label{eq:8}
	\end{aligned}\end{equation}

	The sensing MI quantifies the information contained in the received echo signal
	$\boldsymbol{y}_{S}$ about the target response matrix $\boldsymbol{H}_{S}$, given that the transmitted signal $\boldsymbol{s}$ is known at the BS. Following the virtual uplink interpretation introduced earlier, the sensing MI can be derived as 
	\begin{equation}\begin{aligned}
			I_{S} & =I(\boldsymbol{y}_{S};\boldsymbol{H}_{S}|\boldsymbol{s}) \\
			& =I(\boldsymbol{y}_{S}^{H};\boldsymbol{H}_{S}^{H}|\boldsymbol{s}^{H}) \\
			&=\log\det\left(\boldsymbol{I}_{M}+\frac{1}{4N_{R}\sigma_{S}^{2}}\boldsymbol{V}_{S}\boldsymbol{s}\boldsymbol{s}^{H}\boldsymbol{V}_{S}^{H}\boldsymbol{R}_{H}\right) \\
			&  =\log_{2}\left(1+\frac{1}{4N_{R}\sigma_{S}^{2}}\boldsymbol{s}^{H}\boldsymbol{V}_{S}^{H}\boldsymbol{R}_{H}\boldsymbol{V}_{S}\boldsymbol{s}\right),
			\label{eq:9}
	\end{aligned}\end{equation}
	where the last equality holds due to the matrix determinant lemma, i.e.,
	$\det(\boldsymbol{I}+\boldsymbol{u}\boldsymbol{v}^{H})=1+\boldsymbol{v}^{H}\boldsymbol{u}$.
	
	Our objective is to jointly design the user scheduling vector $\boldsymbol{\alpha}$, the active beamforming matrix $\boldsymbol{W}$, and the backward passive coefficient matrix $\boldsymbol{\Theta}$. The optimization problem can be formulated as follows
	\begin{subequations}\label{eq:10}
		\begin{align}
			(\text{P}1)\!:\! \max_{\boldsymbol{\alpha},\boldsymbol{W},\boldsymbol{\Theta}}\, & \kappa \sum_{k=1}^{K}\log_2\left(\!1\!+\frac{\left|\alpha_k\boldsymbol{h}_{C,k}^H\boldsymbol{w}_k\right|^2}{\sum_{j=1,j\neq k}^K\left|\alpha_j\boldsymbol{h}_{C,k}^H\boldsymbol{w}_j\right|^2\!+4\sigma_C^2}\right)\notag\\
			&+(1-\kappa)\log_{2}\left(1+\frac{1}{4N_{R}\sigma_{S}^{2}}\boldsymbol{s}^{H}\boldsymbol{V}_{S}^{H}\boldsymbol{R}_{H}\boldsymbol{V}_{S}\boldsymbol{s}\right) \label{eq10:Za} \\
			\mathrm{s.t.} \;& \sum_{k=1}^{K}\alpha_{k}\|\boldsymbol{w}_{k}\|_{2}^{2} \leq P_{T}, \label{eq10:Zb}\\ 
			& \sum_{k=1}^{K}\alpha_{k} \leq N_{S}, \label{eq10:Zc} \\
			& \alpha_{k}=\{0{,}1\}, \forall k,\label{eq10:Zd} \\
			& |\theta_{m}|=1, \forall m. \label{eq10:Ze}
	\end{align}\end{subequations}
	Here, we use $\theta_m$ to denote the coefficient of the $m$-th passive element on the omni-steering plate, which can refer to either $\theta_m^t$ or $\theta_m^r$ depending on the operating mode. In the formulated problem, \eqref{eq10:Zb} ensures the total transmit power constraint does not exceed the budget $P_{T}$; \eqref{eq10:Zc} limits the number of scheduled users; \eqref{eq10:Zd} specifies the binary scheduling variables; and \eqref{eq10:Ze} is the unit-modulus constraint for the backward passive coefficient. The weighting factor $\kappa \in [0,1]$ is introduced to balance the S\&C objectives in \eqref{eq10:Za}. (\text{P}1) is challenging to solve due to the complicated objective with fractional structure, along with the coupled variables in \eqref{eq10:Za} and \eqref{eq10:Zb}. Additionally, the integer constraint \eqref{eq10:Zd} and the unit-modulus constraint \eqref{eq10:Ze} further exacerbate its non-convexity.
	
	Observing the structures of the problem, it is found that the sensing MI $I_{S}$ is a function with respect to $\boldsymbol{\Theta}$, $\boldsymbol{\alpha}$ and $\boldsymbol{W}$, while communication MI $I_{C,k}$ depends only on $\boldsymbol{\alpha}$ and $\boldsymbol{W}$. Moreover, the constraint involving $\boldsymbol{\Theta}$ in \eqref{eq10:Ze} is independent of the other constraints. The nature of the separability of our problem motivates us to decompose (\text{P}1) into the following two subproblems, the backward passive coefficient optimization subproblem and the joint scheduling and active beamforming optimization subproblem.

	\subsection{Backward Passive Coefficient Optimization} \label{subsec:SDR}
	In this subsection, we optimize the backward passive coefficient matrix $\boldsymbol{\Theta}$ with $\boldsymbol{\alpha}$ and $\boldsymbol{W}$ fixed. The corresponding subproblem formulation is given by
	\begin{subequations}\label{eq:11}
		\begin{align}
			(\text{P}2):\max_{\boldsymbol{\Theta}} \;& \log_{2}\left(1+\frac{1}{4N_{R}\sigma_{S}^{2}}\boldsymbol{s}^{H}\boldsymbol{V}_{S}^{H}\boldsymbol{R}_{H}\boldsymbol{V}_{S}\boldsymbol{s}\right) \label{eq11:Za} \\
			\mathrm{s.t.}\; & |\theta_{m}|=1, \forall m. \label{eq11:Zb}
		\end{align}
	\end{subequations}
	
	To simplify (\text{P}2), we establish the following proposition.
	
	\newtheorem{proposition}{Proposition}
	\newtheorem{proof}{Proof of Proposition}
	\begin{proposition}\label{prop:pro1}
		Let $\boldsymbol{b} \triangleq \boldsymbol{G} \boldsymbol{W} \mathrm{diag}(\boldsymbol{\alpha}) \boldsymbol{s} \in \mathbb{C}^{M \times 1}$, and define $\boldsymbol{C} \triangleq \mathrm{diag}(\boldsymbol{b})^H \boldsymbol{R}_H \mathrm{diag}(\boldsymbol{b}) \in \mathbb{C}^{M \times M}$.
		Then (\text{P}2) can be equivalently reformulated as
		\begin{subequations}\label{eq:12}
			\begin{align}
				\max_{\boldsymbol{\theta}} \; & \boldsymbol{\theta}^H \boldsymbol{C} \boldsymbol{\theta} \label{eq12:Za} \\
				\mathrm{s.t.} \; & |\theta_m| = 1, \forall m. \label{eq12:Zb}
			\end{align}
		\end{subequations}
		\hfill $\square$
	\end{proposition}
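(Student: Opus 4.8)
The plan is to unwind the definition of the equivalent sensing beamformer $\boldsymbol{V}_S$ and push the scalar objective through a sequence of algebraic identities until the dependence on $\boldsymbol{\theta}$ is isolated in a Hermitian quadratic form. Since $\log_2(1+x)$ is monotonically increasing in $x$, maximizing \eqref{eq11:Za} is equivalent to maximizing the inner term $\frac{1}{4N_R\sigma_S^2}\boldsymbol{s}^H\boldsymbol{V}_S^H\boldsymbol{R}_H\boldsymbol{V}_S\boldsymbol{s}$, and since the positive constant $\frac{1}{4N_R\sigma_S^2}$ does not affect the argmax, it suffices to maximize $\boldsymbol{s}^H\boldsymbol{V}_S^H\boldsymbol{R}_H\boldsymbol{V}_S\boldsymbol{s}$. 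I will note this monotonicity reduction first, so that the rest of the argument concerns only the quadratic term.

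Next I would substitute $\boldsymbol{V}_S = \boldsymbol{\Theta}\boldsymbol{G}\boldsymbol{W}\mathrm{diag}(\boldsymbol{\alpha})$ and observe that $\boldsymbol{V}_S\boldsymbol{s} = \boldsymbol{\Theta}\boldsymbol{G}\boldsymbol{W}\mathrm{diag}(\boldsymbol{\alpha})\boldsymbol{s} = \boldsymbol{\Theta}\boldsymbol{b}$ with $\boldsymbol{b}$ as defined in the proposition. The key step is the standard identity $\boldsymbol{\Theta}\boldsymbol{b} = \mathrm{diag}(\boldsymbol{b})\boldsymbol{\theta}$, which holds because $\boldsymbol{\Theta} = \mathrm{diag}(\boldsymbol{\theta})$ and multiplying a diagonal matrix by a vector is symmetric in which vector is placed on the diagonal. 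Substituting this gives $\boldsymbol{s}^H\boldsymbol{V}_S^H\boldsymbol{R}_H\boldsymbol{V}_S\boldsymbol{s} = (\mathrm{diag}(\boldsymbol{b})\boldsymbol{\theta})^H\boldsymbol{R}_H(\mathrm{diag}(\boldsymbol{b})\boldsymbol{\theta}) = \boldsymbol{\theta}^H\mathrm{diag}(\boldsymbol{b})^H\boldsymbol{R}_H\mathrm{diag}(\boldsymbol{b})\boldsymbol{\theta} = \boldsymbol{\theta}^H\boldsymbol{C}\boldsymbol{\theta}$, which is exactly the objective \eqref{eq12:Za}. The constraint $|\theta_m|=1$ is carried over unchanged from \eqref{eq11:Zb} to \eqref{eq12:Zb}, and one should remark that $\boldsymbol{\theta}$ being unit-modulus is precisely the same constraint set as $\boldsymbol{\Theta}=\mathrm{diag}(\boldsymbol{\theta})$ ranging over diagonal unit-modulus matrices, so the feasible sets of the two problems are in bijection. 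Hence the reformulation is an equivalence, not merely a relaxation.

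The argument is essentially routine; the only point demanding a little care is the commutation identity $\mathrm{diag}(\boldsymbol{\theta})\boldsymbol{b} = \mathrm{diag}(\boldsymbol{b})\boldsymbol{\theta}$ and then keeping track of the Hermitian transpose so that $\boldsymbol{C} = \mathrm{diag}(\boldsymbol{b})^H\boldsymbol{R}_H\mathrm{diag}(\boldsymbol{b})$ comes out in the stated form rather than, say, with $\boldsymbol{b}$ conjugated in the wrong slot. I would also briefly note that $\boldsymbol{C}$ is Hermitian positive semidefinite whenever $\boldsymbol{R}_H \succeq \boldsymbol{0}$ (which holds since $\boldsymbol{R}_H$ is a covariance matrix), since this is what makes \eqref{eq12:Za} a well-posed quadratic maximization and is implicitly used by the subsequent RGA and SDR algorithms. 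No deeper obstacle arises: the whole content is the substitution $\boldsymbol{V}_S\boldsymbol{s} = \mathrm{diag}(\boldsymbol{b})\boldsymbol{\theta}$ plus the monotonicity of the logarithm.
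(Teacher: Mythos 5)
Your proof is correct and follows essentially the same route as the paper's: substitute $\boldsymbol{V}_S\boldsymbol{s}=\boldsymbol{\Theta}\boldsymbol{b}$, apply the commutation identity $\mathrm{diag}(\boldsymbol{\theta})\boldsymbol{b}=\mathrm{diag}(\boldsymbol{b})\boldsymbol{\theta}$, and collect the quadratic form into $\boldsymbol{C}$. You are in fact slightly more complete than the paper, since you state explicitly the monotonicity of $\log_2(1+x)$ that justifies dropping the logarithm and the constant, a step the paper leaves implicit.
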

	
	\begin{proof}\label{proof:proof1}
		Recall that $\boldsymbol{V}_S = \boldsymbol{\Theta} \boldsymbol{G} \boldsymbol{W} \mathrm{diag}(\boldsymbol{\alpha})$ and $\boldsymbol{b} = \boldsymbol{G} \boldsymbol{W} \mathrm{diag}(\boldsymbol{\alpha}) \boldsymbol{s}$. Substituting into (\text{P}2), the objective function becomes
		\begin{equation}
			\begin{aligned}
				\log_{2}\left( 1 + \frac{1}{4N_R \sigma_S^2} \, \boldsymbol{b}^H \boldsymbol{\Theta}^H \boldsymbol{R}_H \boldsymbol{\Theta} \boldsymbol{b} \right).
			\end{aligned}
		\end{equation}
		We have $\boldsymbol{\Theta} = \mathrm{diag}(\boldsymbol{\theta})$, it yields
		\begin{equation}
			\begin{aligned}
				\boldsymbol{b}^H \boldsymbol{\Theta}^H \boldsymbol{R}_H \boldsymbol{\Theta} \boldsymbol{b} 
				= \left(\mathrm{diag}(\boldsymbol{\theta}) \boldsymbol{b}\right)^H \boldsymbol{R}_H \left(\mathrm{diag}(\boldsymbol{\theta}) \boldsymbol{b}\right).
			\end{aligned}
		\end{equation}
		Using the identity $\mathrm{diag}(\boldsymbol{X}) \boldsymbol{Y}=\mathrm{diag}(\boldsymbol{Y}) \boldsymbol{X}$, we further rewrite it as
		\begin{equation}
			\begin{aligned}
				\left(\mathrm{diag}(\boldsymbol{b}) \boldsymbol{\theta}\right)^H \boldsymbol{R}_H \left(\mathrm{diag}(\boldsymbol{b}) \boldsymbol{\theta}\right)
				= \boldsymbol{\theta}^H \left(\mathrm{diag}(\boldsymbol{b})^H \boldsymbol{R}_H \mathrm{diag}(\boldsymbol{b})\right) \boldsymbol{\theta}.
			\end{aligned}
		\end{equation}
		Defining $\boldsymbol{C}\triangleq\mathrm{diag}(\boldsymbol{b})^H\boldsymbol{R}_{H}\mathrm{diag}(\boldsymbol{b})$ yields the equivalent formulation in (\text{P}2). This completes the proof. \hfill $\blacksquare$
	\end{proof}
	
	With Proposition 1, (P2) is reduced to a quadratic optimization problem with unit-modulus constraints as in \eqref{eq:12}. Next, we present two approaches to tackle this problem efficiently.
	
	\textit{(1) RGA-Based Coefficient Design:}
	We first propose a Riemannian gradient ascent based coefficient design, which can directly handle the unit-modulus constraints with low complexity \cite{ref14,ref33}. By viewing the feasible set of \eqref{eq:12} as a complex circle manifold, the problem can be equivalently rewritten in the standard form for manifold optimization
	\begin{equation}
		\max_{\boldsymbol{\theta}\in\mathcal{S}}f(\boldsymbol{\theta}),\label{eq:13}
	\end{equation}
	where $\mathcal{S}=\{\boldsymbol{\theta}\in\mathbb{C}^{M\times1}{:}|\theta_1|=|\theta_2|=\cdots=|\theta_M|=1\}$ denotes the complex circle manifold, and $f(\boldsymbol{\theta})=\boldsymbol{\theta}^H\boldsymbol{C}\boldsymbol{\theta}$.
	
	Similar to gradient ascent methods in Euclidean space, the RGA algorithm consists of computing an ascent direction within the tangent space of the manifold and updating the iterate along this direction. Specifically, the tangent space at any point $\boldsymbol{\theta} \in \mathcal{S}$ is defined as
	\begin{equation}\label{eq:14}
		T_{\boldsymbol{\theta}}\mathcal{S} \triangleq \{\mathbf{z} \in \mathbb{C}^M \mid \Re\{\mathbf{z} \odot \boldsymbol{\theta}^H\} = \mathbf{0}_M\}.
	\end{equation}
	which contains all feasible directions that preserve the unit-modulus constraint to first order. The Euclidean gradient of $f(\boldsymbol{\theta})$ is given by
	\begin{equation}\label{eq:15}
		\nabla_{\boldsymbol{\theta}} f=2\boldsymbol{C}\boldsymbol{\theta}.
	\end{equation}
	To obtain the Riemannian gradient, the Euclidean gradient is projected onto the tangent space using the projection operator
	\begin{equation}\label{eq:16}
		\mathcal{P}_{T_{\boldsymbol{\theta}}\mathcal{S}}(\mathbf{z})\triangleq\mathbf{z}-\Re\{\mathbf{z}\odot\boldsymbol{\theta}^H\}\odot\boldsymbol{\theta}.
	\end{equation}
	Thus, the Riemannian gradient is
	\begin{equation}\label{eq:17}
		\mathrm{grad}_{\boldsymbol{\theta}}f\triangleq\nabla_{\boldsymbol{\theta}}f-\Re\{\nabla_{\boldsymbol{\theta}}f\odot\boldsymbol{\theta}^{H}\}\odot\boldsymbol{\theta}.
	\end{equation}
	
	To improve convergence speed, we employ the Polak–Ribiére conjugate gradient algorithm, which incorporates quasi-second-order information by combining the current Riemannian gradient with the previous update direction. The update direction $\boldsymbol{\eta}_q$ at the $q$-th iteration is computed as
	\begin{equation} \label{eq:18}
		\boldsymbol{\eta}_q=\operatorname{grad}_{\boldsymbol{\theta}^{(q)}}f+\gamma_qT_{\boldsymbol{\theta}^{(q-1)}\to\boldsymbol{\theta}^{(q)}}(\boldsymbol{\eta}_{q-1}),
	\end{equation}
	where $\gamma_q$ is the Polak–Ribiére parameter, and $T_{\boldsymbol{\theta}^{(q-1)}\to\boldsymbol{\theta}^{(q)}}(\boldsymbol{\eta}_{q-1})$ is the vector transport operator projecting the previous search direction $\boldsymbol{\eta}_{q-1}$ onto the current tangent space, defined as
	\begin{equation}\label{eq:19}
		\begin{aligned}
			T_{\boldsymbol{\theta}^{(q-1)}\to\boldsymbol{\theta}^{(q)}}(\boldsymbol{\eta}_{q-1}) & \triangleq T_{\boldsymbol{\theta}^{(q-1)}}\mathcal{S}\to T_{\boldsymbol{\theta}^{(q)}}\mathcal{S} \\
			& \triangleq\boldsymbol{\eta}_{q-1}-\Re\{\boldsymbol{\eta}_{q-1}\odot(\boldsymbol{\theta}^{(q)})^H\}\odot\boldsymbol{\theta}^{(q)}.
		\end{aligned}
	\end{equation}
	The Polak-Ribiére parameter $\gamma_q$ is given by
	\begin{equation}\label{eq:20}
		\gamma_q=\frac{\langle\mathrm{grad}_{\boldsymbol{\theta}^{(q)}}f,\boldsymbol{J}_q\rangle}{\langle\mathrm{grad}_{\boldsymbol{\theta}^{(q-1)}}f,\mathrm{grad}_{\boldsymbol{\theta}^{(q-1)}}f\rangle},
	\end{equation}
	where $	\boldsymbol{J}_q\triangleq\operatorname{grad}_{\boldsymbol{\theta}^{(q)}}f-T_{\boldsymbol{\theta}^{(q-1)}\to\boldsymbol{\theta}^{(q)}}(\boldsymbol{\eta}_{q-1})$ is the difference between the current gradient and the transported previous direction. In essence, $\gamma_q$ quantifies their correlation and thus captures curvature information, enabling the algorithm to refine the update direction beyond first-order methods.

	Once the ascent direction is determined, the next iterate is obtained via the retraction 
	\begin{equation}\label{eq:22}
		\boldsymbol{\theta}^{(q+1)}=\mathcal{R}_{\boldsymbol{\theta}^{(q)}}(\mu_{q}\boldsymbol{\eta}_{q}),
	\end{equation}
	which maps the updated point back onto the complex circle manifold while preserving unit-modulus constraints. Here, $\mu_{q}$ is the step size, typically computed by a line search method such as the Armijo rule. The retraction operator $\mathcal{R}_{\boldsymbol{\theta}}(\cdot)$ for the complex circle manifold is defined as
	\begin{equation} \label{eq:23}
		\mathcal{R}_{\boldsymbol{\theta}}(\mathbf{z})\triangleq\left(\frac{\mathsf{z}_1}{|\mathsf{z}_1|},\cdots,\frac{\mathsf{z}_M}{|\mathsf{z}_M|}\right)^T.
	\end{equation}
	
	The overall procedure of the RGA algorithm for optimizing backward passive coefficient matrix $\boldsymbol{\Theta}$ is summarized in Algorithm \autoref{alg:1}.
	
	\begin{algorithm}[htbp]
	\caption{RGA algorithm for Backward Passive Coefficient Optimization} \label{alg:1}
	\begin{algorithmic}[1]
		\State \textbf{Input:} $\boldsymbol{C}$, $q_{\max}$, $\epsilon$.
		\State \textbf{Initialize} $\boldsymbol{\theta}^{(0)}=\boldsymbol{\theta}^{(1)}\in\mathcal{S}$, $\eta_{0}=\mathrm{grad}_{\boldsymbol{\theta}^{(0)}}f$, and $q=1$.
		\State \textbf{Repeat}
		\State Compute the Euclidean gradient $\nabla_{\boldsymbol{\theta}^{(q)}}f$ via \eqref{eq:15}.
		\State Compute the Riemannian gradient $\mathrm{grad}_{\boldsymbol{\theta}^{(q)}}f$ via \eqref{eq:17}.
		\State Compute the Polak–Ribiére parameter $\gamma_{q}$ via \eqref{eq:20}.
		\State Compute the ascent direction $\boldsymbol{\eta}_q$ via \eqref{eq:18}.
		\State Update $\boldsymbol{\theta}^{(q+1)}$ using retraction \eqref{eq:22}, i.e., $\boldsymbol{\theta}^{(q+1)}=\mathcal{R}_{\boldsymbol{\theta}^{(q)}}(\mu_q \boldsymbol{\eta}_q)$, where $\mu_q$ is the step size can be computed by a line search method such as the Armijo rule.
		\State $q=q+1$.
		\State \textbf{Until} $\left\|\mathrm{grad}_{\boldsymbol{\theta}^{(q)}}f\right\|_{2}\leq\epsilon$ or $q\geq{q_\mathrm{max}}$.
		\State \textbf{Output:} $\boldsymbol{\Theta}=\mathrm{diag}(\boldsymbol{\theta}^{(q+1)})$
		\end{algorithmic}
	\end{algorithm}

	\textit{(2) SDR-Based Coefficient Design:} As an alternative, semidefinite relaxation provides a classic approach for handling such non-convex quadratic problems. Specifically, problem \eqref{eq:12} can be rewritten as a rank-constrained semidefinite program by introducing the lifted variable $\boldsymbol{Q} \triangleq \boldsymbol{\theta}\boldsymbol{\theta}^H \in \mathbb{C}^{M \times M}$. Note that $\boldsymbol{Q}$ satisfies $\boldsymbol{Q} \succeq \boldsymbol{0}$ and $\mathrm{rank}(\boldsymbol{Q}) = 1$. Using this transformation, the objective function in \eqref{eq:12} becomes $\operatorname{tr}(\boldsymbol{C}\boldsymbol{Q})$, and the unit-modulus constraints on $\boldsymbol{\theta}$ translate to $[\boldsymbol{Q}]_{m,m}=1$.

	Thus, the original problem \eqref{eq:12} can be equivalently reformulated as follows
	\begin{subequations} \label{eq:24}
		\begin{align}
			\max_{\boldsymbol{Q}}\;& \operatorname{tr}(\boldsymbol{C}\boldsymbol{Q})  \label{eq24:Za} \\
			\mathrm{s.t.}\;&\mathrm{diag}(\boldsymbol{Q}) =\boldsymbol{1},\label{eq24:Zb} \\
			&\boldsymbol{Q}\succeq\boldsymbol{0},\label{eq24:Zc} \\
			&\mathrm{rank}(\boldsymbol{Q}) =1,\label{eq24:Zd}
		\end{align}
	\end{subequations}

	By omitting the rank-one constraint in \eqref{eq24:Zd}, the problem reduces to a standard semidefinite program (SDP), which can be efficiently solved using convex optimization tools such as CVX. However, the optimal solution $\boldsymbol{Q}^\mathrm{opt}$ to the relaxed SDP is generally not of rank one. To address this issue, Gaussian randomization is applied to generate multiple candidate rank-one solutions based on $\boldsymbol{Q}^\mathrm{opt}$.
	\begin{equation}\label{eq:101}
		\tilde{\boldsymbol{\theta}}^{(q)} = e^{i\angle\left(\boldsymbol{\Omega}\boldsymbol{\Sigma}^{1/2}\boldsymbol{x}^{(q)}\right)},
		\quad q=1,\dots, N_{\mathrm{rand}},
	\end{equation}
	where $\boldsymbol{Q}^\mathrm{opt}=\boldsymbol{\Omega}\boldsymbol{\Sigma}\boldsymbol{\Omega}^H$ is the eigenvalue decomposition and $\boldsymbol{x}^{(q)}\sim\mathcal{CN}(\boldsymbol{0},\boldsymbol{I})$ is a standard complex Gaussian random vector. Among these candidates, the final approximate solution $\boldsymbol{\theta}^\mathrm{opt}$ is obtained by selecting the one that yields the highest objective value
	\begin{equation}\label{eq:102}
		\boldsymbol{\theta}^\mathrm{opt} = \arg\max_{\tilde{\boldsymbol{\theta}}^{(q)}}\Re\left\{\operatorname{tr}\left(\boldsymbol{C}\tilde{\boldsymbol{\theta}}^{(q)}(\tilde{\boldsymbol{\theta}}^{(q)})^H\right)\right\}.
	\end{equation}
	
	The SDR algorithm is summarized in Algorithm \autoref{alg:3}. While SDR with Gaussian randomization incurs higher computational complexity than the RGA algorithm, it provides a benchmark solution that approaches the global optimum of the relaxed SDP, offering performance validation and robustness guarantees for the passive coefficient design.

	\subsection{Joint Scheduling and Active Beamforming Optimization}
	In this subsection, we optimize the user scheduling vector $\boldsymbol{\alpha}$ and active beamforming matrix $\boldsymbol{W}$ with $\boldsymbol{\Theta}$ fixed.
	\begin{subequations}\label{eq:25}
		\begin{align}
			(\text{P}3):\max_{\boldsymbol{\alpha},\boldsymbol{W}}\; & \kappa\sum_{k=1}^{K}\log_2\left(1+\frac{\left|\alpha_k\boldsymbol{h}_{C,k}^H\boldsymbol{w}_k\right|^2}{\sum_{j=1,j\neq k}^K\left|\alpha_j\boldsymbol{h}_{C,k}^H\boldsymbol{w}_j\right|^2+4\sigma_C^2}\right)\notag\\
			&+(1-\kappa)\log_{2}\left(1+\frac{1}{4N_{R}\sigma_{S}^{2}}\boldsymbol{s}^{H}\boldsymbol{V}_{S}^{H}\boldsymbol{R}_{H}\boldsymbol{V}_{S}\boldsymbol{s}\right) \label{eq25:Za} \\
			\mathrm{s.t.} \;& \sum_{k=1}^{K}\alpha_{k}\|\boldsymbol{w}_{k}\|_{2}^{2} \leq P_{T}, \label{eq25:Zb}\\ 
			& \sum_{k=1}^{K}\alpha_{k} \leq N_{S}, \label{eq25:Zc} \\
			& \alpha_{k}=\{0{,}1\}, \forall k.\label{eq25:Zd}
	\end{align}\end{subequations}
	
	We begin by decoupling $\boldsymbol{\alpha}$ and $\boldsymbol{W}$ in constraint \eqref{eq25:Zb}. Specifically, we reformulate it as
	\begin{equation}\label{eq:26}
		\sum_{k=1}^{K}\|\boldsymbol{w}_{k}\|_{2}^{2}\leqslant P_{T},\quad\|\boldsymbol{w}_{k}\|_{2}^{2}\leqslant\alpha_{k}P_{T},\forall k.
	\end{equation}
	This is based on the intuitive observation that if CU $k$ is not scheduled, i.e., $\alpha_{k}=0$, its associated precoder $\boldsymbol{w}_{k}=0$, and otherwise. Accordingly, subproblem (\text{P}3) becomes
	\begin{subequations}\label{eq:27}
		\begin{align}
			\max_{\boldsymbol{\alpha},\boldsymbol{W}}\; & \kappa\sum_{k=1}^{K}\log_2\left(1+\frac{\left|\alpha_k\boldsymbol{h}_{C,k}^H\boldsymbol{w}_k\right|^2}{\sum_{j=1,j\neq k}^K\left|\alpha_j\boldsymbol{h}_{C,k}^H\boldsymbol{w}_j\right|^2+4\sigma_C^2}\right)\notag\\
			&+(1-\kappa)\log_{2}\left(1+\frac{1}{4N_{R}\sigma_{S}^{2}}\boldsymbol{s}^{H}\boldsymbol{V}_{S}^{H}\boldsymbol{R}_{H}\boldsymbol{V}_{S}\boldsymbol{s}\right) \label{eq27:Za}\\
			\mathrm{s.t.}\;&\sum_{k=1}^{K}\|\boldsymbol{w}_{k}\|_{2}^{2}\leqslant P_{T},\label{eq27:Zb} \\
			&\|\boldsymbol{w}_{k}\|_{2}^{2}\leqslant\alpha_{k}P_{T},\forall k,\label{eq27:Zc} \\
			& \sum_{k=1}^{K}\alpha_{k}\leqslant N_{S},\label{eq27:Zd} \\
			& \alpha_{k}=\{0,1\},\forall k.\label{eq27:Ze}
	\end{align}\end{subequations}

	\begin{table*}[!hb]
		\normalsize 
		\centering
		\hrule	\smallskip
		\begin{minipage}{\textwidth}
			\begin{equation}\label{eq:31}
				\begin{aligned}
					e_{C,k}& =\mathbb{E}\left[(u_{C,k}y_{C,k}-s_k)(u_{C,k}y_{C,k}-s_k)^H\right]=\left|\frac{1}{2}u_{C,k}\alpha_k\boldsymbol{h}_{C,k}^H\boldsymbol{w}_k-1\right|^2+\sum_{j\neq k}^K\left|\frac{1}{2}u_{C,k}\alpha_j\boldsymbol{h}_{C,k}^H\boldsymbol{w}_j\right|^2+\sigma_C^2\left|u_{C,k}\right|^2
				\end{aligned}
			\end{equation}
			\begin{equation}\label{eq:32}
				\begin{aligned}
					\boldsymbol{E}_S=\;&\mathbb{E}[(\boldsymbol{u}_S\boldsymbol{y}_S^H-\boldsymbol{H}_S^H)(\boldsymbol{u}_S\boldsymbol{y}_S^H-\boldsymbol{H}_S^H)^H]\\
					=\;&\frac{1}{4}\boldsymbol{u}_S\boldsymbol{s}^H\mathrm{diag}(\boldsymbol{\alpha})^H\boldsymbol{W}^H\boldsymbol{G}^H\boldsymbol{\Theta}^H\boldsymbol{R}_H\boldsymbol{\Theta}\boldsymbol{G}\boldsymbol{W}\mathrm{diag}(\boldsymbol{\alpha})\boldsymbol{s}\boldsymbol{u}_S^H-\frac{1}{2}\boldsymbol{u}_S\boldsymbol{s}^H\mathrm{diag}(\boldsymbol{\alpha})^H\boldsymbol{W}^H\boldsymbol{G}^H\boldsymbol{\Theta}^H\boldsymbol{R}_H-\frac{1}{2}\boldsymbol{R}_H\boldsymbol{\Theta}\boldsymbol{G}\boldsymbol{W}\mathrm{diag}(\boldsymbol{\alpha})\boldsymbol{s}\boldsymbol{u}_S^H\\
					&+N_R\boldsymbol{\sigma}_S^2\boldsymbol{u}_S\boldsymbol{u}_S^H+\boldsymbol{R}_H
				\end{aligned}
			\end{equation}
		\end{minipage}
	\end{table*}

	To address the binary constraint in \eqref{eq27:Ze}, we adopt a continuous relaxation strategy by introducing a smooth penalty function for each $\alpha_{k}$, given by 
	\begin{equation}\label{eq:28}
		f_\rho(\alpha_k)=\rho[\alpha_k\log(\alpha_k)+(1-\alpha_k)\log(1-\alpha_k)],
	\end{equation}
	where $\rho$ is a positive constant that determines the penalty weight. This function is convex over $\alpha_k\in[0,1]$ and is able to attain its maximum value (i.e., 0) when $\alpha_k$ is exactly 0 or 1. Thus, maximizing this penalty encourages the solution to remain near the binary boundaries, effectively approaching the original discrete constraint. Based on this relaxation, problem (\text{P}3) is reformulated as
	\begin{subequations}\label{eq:29}
		\begin{align}
			\max_{\boldsymbol{\alpha},\boldsymbol{W}}\; & \kappa\sum_{k=1}^{K}\log_2\left(1+\frac{\left|\alpha_k\boldsymbol{h}_{C,k}^H\boldsymbol{w}_k\right|^2}{\sum_{j=1,j\neq k}^K\left|\alpha_j\boldsymbol{h}_{C,k}^H\boldsymbol{w}_j\right|^2+4\sigma_C^2}\right)\notag\\
			&+(1-\kappa)\log_{2}\left(1+\frac{1}{4N_{R}\sigma_{S}^{2}}\boldsymbol{s}^{H}\boldsymbol{V}_{S}^{H}\boldsymbol{R}_{H}\boldsymbol{V}_{S}\boldsymbol{s}\right)\notag\\
			&+\sum_{k=1}^Kf_\rho(\alpha_k)\label{eq29:Za}\\
			\mathrm{s.t.}\;&\eqref{eq27:Zb}, \eqref{eq27:Zc}, \eqref{eq27:Zd}.\label{eq29:Zb}
		\end{align}
	\end{subequations}

	Now, the objective function remains non-convex. For handling such as quadratic-over-quadratic structure, one may transform it into a SDP problem. However, the variables often shift from vectors to high-dimensional positive semi-definite matrices, leading to high computational complexity. To facilitate low complexity solution, we exploit the relationship between mutual information and minimum mean square error, together with the effect of user scheduling, as stated in the following proposition.
	
	\begin{algorithm}[tbp]
	\caption{SDR Algorithm for Backward Passive Coefficient Optimization} 
	\label{alg:3}
	\begin{algorithmic}[1]
		\State \textbf{Input:} $\boldsymbol{C}$, $N_{\mathrm{rand}}$.
		\State \textbf{Initialize} $q=1$
		\State Solve the relaxed SDP problem \eqref{eq:24} via CVX, obtain the optimal solution $\boldsymbol{Q}^\mathrm{opt}$.
		\State Perform eigenvalue decomposition $\boldsymbol{Q}^\mathrm{opt} = \boldsymbol{\Omega}\boldsymbol{\Sigma}\boldsymbol{\Omega}^H$.
		\State \textbf{Repeat}
		\State Generate $\boldsymbol{x}^{(q)} \sim \mathcal{CN}(\boldsymbol{0}, \boldsymbol{I})$.
		\State Construct candidate solution $\tilde{\boldsymbol{\theta}}^{(q)}$ via \eqref{eq:101}.
		\State $q=q+1$. 
		\State \textbf{Until} $q\geq N_{\mathrm{rand}}$.
		\State Select the best candidate $\boldsymbol{\theta}^\mathrm{opt}$ via \eqref{eq:102}.
		\State \textbf{Output:} $\boldsymbol{\Theta}=\mathrm{diag}(\boldsymbol{\theta}^\mathrm{opt})$.
	\end{algorithmic}
\end{algorithm}

	\begin{proposition} \label{prop:pro2}
		Let auxiliary variables $\{u_{C,k}\}$, $\boldsymbol{u}_{S}$, and $\{\Lambda_{C,k}\}$, $\boldsymbol{\Lambda}_{S}$ represent the receiving combiner and weight for communication and sensing, respectively, the original sum-MI maximization problem can be equivalently transformed into a convex sum weighted MSE minimization problem without loss of optimality. Denote $e_{C,k}$ and $\boldsymbol{E}_S$ as the mean square errors in estimating $s_k$ for CU $k$ and that in estimating $\boldsymbol{H}_S^H$ for the sensing target, respectively, with their explicit expressions given in \eqref{eq:31} and \eqref{eq:32}. Problem \eqref{eq:29} can be rewritten as
		\begin{subequations}\label{eq:30}
			\begin{align} 
				\min_{\substack{\{u_{C,k}\}, \boldsymbol{u}_S, \{\Lambda_{C,k}\},\\ \boldsymbol{\Lambda}_S, \boldsymbol{\alpha}, \boldsymbol{W}}}\,&\kappa\sum_{k=1}^K(\Lambda_{C,k}e_{C,k}-\mathrm{log}\Lambda_{C,k})\notag\\
				&+(1-\kappa)\left(\operatorname{tr}(\boldsymbol{\Lambda}_S\boldsymbol{E}_S)-\mathrm{logdet}(\boldsymbol{\Lambda}_S)\right)\notag\\
				&-\sum_{k=1}^Kf_\rho(\alpha_k) \label{eq30:Za}\\
				\mathrm{s.t.}\;&\eqref{eq27:Zb}, \eqref{eq27:Zc}, \eqref{eq27:Zd},\label{eq30:Zb}
		\end{align}\end{subequations}
		\hfill $\square$
	\end{proposition}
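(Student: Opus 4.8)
# Proof Proposal for Proposition~\ref{prop:pro2}

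\textbf{Overall strategy.} The plan is to establish the equivalence in two independent pieces—one for each communication user's MI term and one for the sensing MI term—by invoking the well-known MI–MMSE relationship (the WMMSE reformulation), and then argue that the penalty term $\sum_k f_\rho(\alpha_k)$ simply carries over unchanged. For each piece I would introduce the receive combiner as an auxiliary variable, show that the MMSE error expression, when minimized over the combiner, yields the classical MMSE, and then introduce the MSE weight as a second auxiliary variable so that minimizing the weighted-MSE-minus-log-weight objective over the weight reproduces exactly $\log(1+\mathrm{SINR})$ (resp.\ $\log\det(\boldsymbol I + \cdot)$). Composing these, maximizing sum-MI is equivalent to minimizing the sum weighted MSE objective jointly over all variables.

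\textbf{Communication terms.} For fixed $\boldsymbol\alpha, \boldsymbol W$, treat $y_{C,k}$ from \eqref{eq:4} as the observation and $u_{C,k} y_{C,k}$ as the linear estimate of $s_k$. First I would expand $e_{C,k}$ as in \eqref{eq:31} and minimize over $u_{C,k}$ by setting $\partial e_{C,k}/\partial u_{C,k}^* = 0$, obtaining the MMSE combiner $u_{C,k}^{\mathrm{opt}} = \tfrac{1}{2}\alpha_k \boldsymbol h_{C,k}^H \boldsymbol w_k / \big(\sum_j \tfrac14 |\alpha_j \boldsymbol h_{C,k}^H \boldsymbol w_j|^2 + \sigma_C^2\big)$; substituting back gives $e_{C,k}^{\min} = \big(1 + \mathrm{SINR}_{C,k}\big)^{-1}$ where $\mathrm{SINR}_{C,k}$ is exactly the argument inside \eqref{eq:8}. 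Next, for the weight, the scalar function $g(\Lambda) = \Lambda e - \log\Lambda$ is convex in $\Lambda>0$ with minimizer $\Lambda^{\mathrm{opt}} = 1/e$ and minimum value $1 + \log e$; applied with $e = e_{C,k}^{\min}$ this yields $\min_{\Lambda_{C,k}} (\Lambda_{C,k} e_{C,k} - \log \Lambda_{C,k}) = 1 - \log e_{C,k}^{\min} = 1 + \log(1+\mathrm{SINR}_{C,k})$, which matches $I_{C,k}$ up to the additive constant $1$ and the $\log$ base. Since the base change and additive constants do not affect the argmax, the communication part transfers.

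\textbf{Sensing term.} I would mirror the argument in matrix form using the virtual-uplink model \eqref{eq:7}, treating $\boldsymbol u_S \boldsymbol y_S^H$ as an estimate of $\boldsymbol H_S^H$. Expanding $\boldsymbol E_S$ as in \eqref{eq:32} and minimizing over $\boldsymbol u_S$ gives the MMSE receiver and $\boldsymbol E_S^{\min} = \boldsymbol R_H - \tfrac14 \boldsymbol R_H \boldsymbol\Theta \boldsymbol G \boldsymbol W \mathrm{diag}(\boldsymbol\alpha)\boldsymbol s \boldsymbol s^H (\cdot)^H \boldsymbol R_H / (\tfrac14 \boldsymbol s^H \boldsymbol V_S^H \boldsymbol R_H \boldsymbol V_S \boldsymbol s + N_R\sigma_S^2)$, which by the matrix inversion lemma equals $\big(\boldsymbol R_H^{-1} + \tfrac{1}{4N_R\sigma_S^2}\boldsymbol V_S \boldsymbol s\boldsymbol s^H \boldsymbol V_S^H\big)^{-1}$ after the appropriate rank-one simplification consistent with \eqref{eq:9}. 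Then the matrix function $\operatorname{tr}(\boldsymbol\Lambda_S \boldsymbol E) - \log\det(\boldsymbol\Lambda_S)$ is convex in $\boldsymbol\Lambda_S \succ \boldsymbol 0$ with minimizer $\boldsymbol\Lambda_S^{\mathrm{opt}} = \boldsymbol E^{-1}$ and minimum value $\log\det(\boldsymbol E^{-1}) + \mathrm{const}$; substituting $\boldsymbol E = \boldsymbol E_S^{\min}$ recovers $\log\det\big(\boldsymbol I + \tfrac{1}{4N_R\sigma_S^2}\boldsymbol V_S\boldsymbol s\boldsymbol s^H\boldsymbol V_S^H \boldsymbol R_H\big) = I_S$ via the same determinant lemma used in \eqref{eq:9}. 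Finally, since the penalty $-\sum_k f_\rho(\alpha_k)$ appears with the same sign flip in going from a max to a min and involves none of the new auxiliary variables, it is unaffected; collecting all terms with the weights $\kappa$ and $1-\kappa$ gives \eqref{eq:30}, and the constraints \eqref{eq27:Zb}--\eqref{eq27:Zd} are untouched.

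\textbf{Main obstacle.} The routine parts are the two scalar/matrix first-order conditions; the delicate point is the \emph{equivalence} claim—namely that the joint minimization over $(\{u_{C,k}\}, \boldsymbol u_S, \{\Lambda_{C,k}\}, \boldsymbol\Lambda_S, \boldsymbol\alpha, \boldsymbol W)$ has the same optimal $(\boldsymbol\alpha, \boldsymbol W)$ as the original problem \eqref{eq:29}. This requires verifying that at any fixed $(\boldsymbol\alpha,\boldsymbol W)$ the partial minimization over the auxiliary variables is attained (the combiner optimum always exists; the weight optimum exists provided the relevant MSE quantities are positive definite, which holds because of the noise terms $\sigma_C^2, \sigma_S^2 > 0$) and reproduces exactly the negated MI objective up to constants independent of $(\boldsymbol\alpha,\boldsymbol W)$—so the paper should, strictly, note that $e_{C,k}$ and the diagonal entries of $\boldsymbol E_S$ stay bounded away from zero. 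I would also flag the convexity assertion in the proposition statement: the reformulated objective is \emph{block}-convex / convex in each block (jointly convex in $(\boldsymbol\Lambda_S, \boldsymbol u_S)$ for fixed $\boldsymbol W$, etc.) rather than jointly convex in all variables, and the constraint set is convex after the relaxation \eqref{eq:26}; this block structure is exactly what enables the alternating closed-form updates promised in the text, so the proof should make the per-block convexity explicit rather than claim global joint convexity.
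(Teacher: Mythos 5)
Your proposal is correct in substance but follows a genuinely different route from the paper. The paper does \emph{not} use the classical partial-minimization (variable-elimination) argument you propose; instead it writes down the Lagrangians of the relaxed sum-MI problem \eqref{eq:MI} and of the weighted-MMSE problem \eqref{eq:MSE} (with the MMSE combiners already substituted), computes the stationarity conditions \eqref{eq:LW1}--\eqref{eq:LA2} with respect to $\boldsymbol{w}_k$ and $\alpha_k$, and shows that under the weight choice $\Lambda_{C,k}=(e_{C,k}^{\mathrm{MMSE}})^{-1}$, $\boldsymbol{\Lambda}_S=(\boldsymbol{E}_S^{\mathrm{MMSE}})^{-1}$ the two sets of gradients coincide, so the problems share the same KKT points. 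Your approach --- minimizing over $u_{C,k}$ to recover $e_{C,k}^{\mathrm{MMSE}}=(1+D_k/N_k)^{-1}$ and over $\boldsymbol{u}_S$ to recover $\boldsymbol{E}_S^{\mathrm{MMSE}}$, then using $\min_{\Lambda>0}(\Lambda e-\log\Lambda)=1+\log e$ and its $\log\det$ analogue to peel off the weights --- establishes the stronger statement that the two objectives are identical up to additive constants after eliminating the auxiliary variables, so global (not merely stationary) optima correspond; it also delivers the closed-form updates \eqref{eq:37}, \eqref{eq:38}, \eqref{eq:49}, \eqref{eq:50} as by-products. The paper's KKT comparison buys an explicit derivation of \emph{why} the inverse-MMSE weights are the right choice, at the cost of only a first-order equivalence. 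Your caveats about existence of the partial minimizers and about the objective being block-convex rather than jointly convex are both valid and apply equally to the paper's formulation.

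One slip to fix: having correctly stated that $\min_{\Lambda}(\Lambda e-\log\Lambda)=1+\log e$, you then write the result as $1-\log e_{C,k}^{\min}=1+\log(1+\mathrm{SINR}_{C,k})$. The correct substitution is $1+\log e_{C,k}^{\min}=1-\log(1+\mathrm{SINR}_{C,k})$, i.e.\ the partial minimum equals a constant \emph{minus} the MI, which is what makes the outer minimization over $(\boldsymbol{\alpha},\boldsymbol{W})$ equivalent to MI \emph{maximization}; as written, the sign would invert the direction of the equivalence.
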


	\begin{proof}\label{proof:proof2}
		The equivalence is established by comparing the first-order optimality conditions of the original sum-MI maximization problem and the proposed sum weighted MMSE formulation.
		
		We begin by compactly denoting the constraints in problem \eqref{eq:29}  as $\boldsymbol{g}(\boldsymbol{\alpha},\boldsymbol{W})\preceq\mathbf{0}$, where the notation “$\preceq$” is understood in an element-wise sense. To facilitate subsequent derivations, we define 
		$D_k\triangleq\left|\alpha_k\boldsymbol{h}_{C,k}^H\boldsymbol{w}_k\right|^2$, $N_k\triangleq\sum_{j=1,j\neq k}^K|\alpha_j\boldsymbol{h}_{C,k}^H\boldsymbol{w}_j|^2+4\sigma_C^2$, $\mathit{\Gamma}\triangleq\frac{1}{4N_{R}\sigma_{S}^{2}}\boldsymbol{s}^{H}\boldsymbol{V}_{S}^{H}\boldsymbol{R}_{H}\boldsymbol{V}_{S}\boldsymbol{s}$.
		With these notations, problem \eqref{eq:29} can be equivalently rewritten as
		\begin{subequations}\label{eq:MI}
			\begin{align}
				\min_{\boldsymbol{\alpha},\boldsymbol{W}}\;&-\kappa\sum_{k=1}^K\log_2\left(1+\frac{D_k}{N_k}\right)-(1-\kappa)\log_2(1+\mathit{\Gamma})\notag\\
				&-\sum_{k=1}^Kf_\rho(\alpha_k)\label{eqMI:Zb}  \\
				\mathrm{s.t.}\;&\boldsymbol{g}(\boldsymbol{\alpha},\boldsymbol{W})\preceq\mathbf{0}.\label{eqMI:Zb} 
			\end{align}
		\end{subequations}
		
			\begin{table*}[!hb]
			\normalsize  
			\centering
			\hrule	\smallskip
			\begin{minipage}{\textwidth}
				\begin{equation}\label{eq:LW1}
					\begin{aligned}
						\nabla_{\boldsymbol{w}_{k}}\mathcal{L}_{(38)} =-\alpha_k^2\kappa\left(\frac{\boldsymbol{h}_{C,k}\boldsymbol{h}_{C,k}^H\boldsymbol{w}_k}{D_k+N_k}-\sum_{j\neq k}\frac{D_j}{N_j(D_j+N_j)}\boldsymbol{h}_{C,j}\boldsymbol{h}_{C,j}^H\boldsymbol{w}_k\right)-\frac{1-\kappa}{1+\mathit{\Gamma}} \cdot\frac{\alpha_{k}s_{k}^{*}}{4N_{R}\sigma_{S}^{2}}\boldsymbol{B}\boldsymbol{W}\mathrm{diag}(\boldsymbol{\alpha})\boldsymbol{s}-\frac{\partial(\boldsymbol{\mu}^{T}\mathbf{g}(\boldsymbol{\alpha},\boldsymbol{W}))}{\partial\boldsymbol{w}_{k}}
				\end{aligned}\end{equation}
				\begin{equation}\label{eq:LA1}
					\nabla_{\alpha_k}\mathcal{L}_{(38)}=-\alpha_k^2\kappa\left(\frac{|\boldsymbol{h}_{C,k}^H\boldsymbol{w}_k|^2}{D_k+N_k}-\sum_{j\neq k}\frac{D_j|\boldsymbol{h}_{C,j}^H\boldsymbol{w}_k|^2}{N_j\left(D_j+N_j\right)}\right)-\frac{1-\kappa}{1+\mathit{\Gamma}}\cdot\frac{s_k^*}{4N_R\sigma_S^2}\boldsymbol{w}_k^H\boldsymbol{B}\boldsymbol{W}\mathrm{diag}(\boldsymbol{\alpha})\boldsymbol{s}-f_\rho^{\prime}(\alpha_k)-\frac{\partial(\boldsymbol{\mu}^T\boldsymbol{g}(\boldsymbol{\alpha},\boldsymbol{W}))}{\partial\alpha_k}
				\end{equation}
				\begin{equation}\label{eq:LW2}
					\begin{aligned}
						\nabla_{\boldsymbol{w}_k}\mathcal{L}_{(47)}=&-\alpha_k^2\kappa\left(\frac{\Lambda_{C,k}N_k\boldsymbol{h}_{C,k}\boldsymbol{h}_{C,k}^H\boldsymbol{w}_k}{(D_k+N_k)^2}-\sum_{j\neq k}\frac{\Lambda_{C,j}D_j\boldsymbol{h}_{C,j}\boldsymbol{h}_{C,j}^H\boldsymbol{w}_k}{(D_j+N_j)^2}\right)-\frac{(1-\kappa)\alpha_{k}s_{k}^{*}}{4N_{R}\sigma_{S}^{2}(1+\mathit{\Gamma})}\boldsymbol{M}^H\boldsymbol{E}_{S}^{\mathrm{MMSE}}\boldsymbol{\Lambda}_S\boldsymbol{R}_H\boldsymbol{M}\boldsymbol{W}\mathrm{diag}(\boldsymbol{\alpha})\boldsymbol{s}\\
						&-\frac{\partial(\boldsymbol{\mu}^{T}\mathbf{g}(\boldsymbol{\alpha},\boldsymbol{W}))}{\partial\boldsymbol{w}_{k}}		
					\end{aligned}
				\end{equation}
				\begin{equation}\label{eq:LA2}
				\begin{aligned}
					\nabla_{\alpha_{k}}\mathcal{L}_{(47)}=&-\alpha_{k}^{2}\kappa\left(\frac{\Lambda_{C,k}N_{k}|\boldsymbol{h}_{C,k}^{H}\boldsymbol{w}_{k}|^{2}}{(D_{k}+N_{k})^{2}}-\sum_{j\neq k}\Lambda_{C,j}\frac{D_{j}|\boldsymbol{h}_{C,j}^{H}\boldsymbol{w}_{k}|^{2}}{(D_{j}+N_{j})^{2}}\right)-\frac{(1-\kappa)s_k^*}{4N_R\sigma_S^2}\boldsymbol{w}_k^H\boldsymbol{M}^H\boldsymbol{E}_{S}^{\mathrm{MMSE}}\boldsymbol{\Lambda}_S\boldsymbol{E}_{S}^{\mathrm{MMSE}}\boldsymbol{M}\boldsymbol{W}\mathrm{diag}(\boldsymbol{\alpha})\boldsymbol{s}\\
					&-f_\rho^{\prime}(\alpha_k)-\frac{\partial(\boldsymbol{\mu}^T\boldsymbol{g}(\boldsymbol{\alpha},\boldsymbol{W}))}{\partial\alpha_k}
					\end{aligned}
				\end{equation}	
			\end{minipage}
			\vspace{-0.5cm} 
		\end{table*}
		Then we apply the MMSE principle and introduce the optimal receive combiners for communication and sensing. For CU $k$, the MMSE combiner \cite{ref34} is
		\begin{equation} \label{eq:37}
			u_{C,k}=\frac{1}{2}\alpha_{k}\boldsymbol{w}_{k}^{H}\boldsymbol{h}_{C,k}(\sum_{j=1}^{k}\frac{1}{4}\alpha_{j}\boldsymbol{h}_{C,k}^{H}\boldsymbol{w}_{j}\boldsymbol{w}_{j}^{H}\boldsymbol{h}_{C,k}+\sigma_{C}^{2})^{-1}.
		\end{equation}
		For the sensing target, the combiner of BS's sensing receiver is given as
		\begin{equation} \label{eq:38}
			\boldsymbol{u}_{S}=2\boldsymbol{R}_{H}\boldsymbol{V}_{S}\boldsymbol{s}(\boldsymbol{s}^{H}\boldsymbol{V}_{S}^{H}\boldsymbol{R}_{H}\boldsymbol{V}_{S}\boldsymbol{s}+4N_{R}\sigma_{S}^{2})^{-1}.
		\end{equation}
		The associated scalar and matrix MMSEs \cite{ref35} are 
		\begin{equation}\label{eq:39}
			\begin{aligned}
				e_{C,k}^\mathrm{MMSE}&=1-\frac{1}{4}\left|\alpha_{k}\boldsymbol{h}_{C,k}^{H}\boldsymbol{w}_{k}\right|^{2}\left(\sum_{j=1}^{k}\frac{1}{4}\left|\alpha_{j}\boldsymbol{h}_{C,k}^{H}\boldsymbol{w}_{j}\right|^{2}+\sigma_{C}^{2}\right)^{-1}\\
				&=(1+\frac{D_k}{N_k})^{-1},\forall k
		\end{aligned}\end{equation}
		\begin{equation}\label{eq:40}
			\begin{aligned}
				\boldsymbol{E}_{S}^{\mathrm{MMSE}}=\boldsymbol{R}_{H}(\boldsymbol{I}_{M}+\frac1{4N_{R}\sigma_{S}^{2}}\boldsymbol{V}_{S}\boldsymbol{s}\boldsymbol{s}^{H}\boldsymbol{V}_{S}^{H}\boldsymbol{R}_{H})^{-1}.
		\end{aligned}\end{equation}
	
		Assuming that the MMSE receive combiners are employed, the sum weighed MMSE problem can be formulated as
		\begin{subequations}\label{eq:MSE}
			\begin{align} 
				\min_{\boldsymbol{\alpha}, \boldsymbol{W}}\;&\kappa\sum_{k=1}^K\Lambda_{C,k}e_{C,k}^\mathrm{MMSE}+(1-\kappa)\operatorname{tr}(\boldsymbol{\Lambda}_S\boldsymbol{E}_{S}^{\mathrm{MMSE}})-\sum_{k=1}^K\left.f_\rho(\alpha_k)\right. \label{eqMSE:Za}\\
				\mathrm{s.t.}\;&\boldsymbol{g}(\boldsymbol{\alpha},\boldsymbol{W})\preceq\mathbf{0}. \label{eqMSE:Zb} 
			\end{align}
		\end{subequations}
		where $\Lambda_{C,k}$ are scalar weights and $\boldsymbol{\Lambda}_S$ is a matrix weight. 
		
		Next, we construct the Lagrangian functions for both problems. For problem \eqref{eq:MI}, the Lagrangian is given by
		\begin{equation}\label{eq:46}
			\begin{aligned}
				&\mathcal{L}_{(38)}\left(\boldsymbol{\alpha},\boldsymbol{W},\boldsymbol{\mu}\right)=-\kappa\sum_{k=1}^K\log_2\left(1+\frac{D_k}{N_k}\right)\\
				&-(1-\kappa)\log_2(1+\mathit{\Gamma})-\sum_{k=1}^Kf_\rho(\alpha_k)+\boldsymbol{\mu}^T\boldsymbol{g}(\boldsymbol{\alpha},\boldsymbol{W}),
			\end{aligned}
		\end{equation}
		with $\boldsymbol{\mu}\succeq 0$ denoting the Lagrange multipliers associated with the constraints.
		
		Similarly, the Lagrangian of the sum weighed MMSE reformulation \eqref{eq:MSE} is expressed as
		\begin{equation}\label{eq:47}
			\begin{aligned}
				\mathcal{L}_{(47)}(\boldsymbol{\alpha},\boldsymbol{W},\boldsymbol{\mu})=\;&\kappa\sum_{k=1}^K\Lambda_{C,k}e_{C,k}^\mathrm{MMSE}+(1-\kappa)\operatorname{tr}(\boldsymbol{\Lambda}_S\boldsymbol{E}_{S}^{\mathrm{MMSE}})\\
				&-\sum_{k=1}^Kf_\rho(\alpha_k)+\boldsymbol{\mu}^T\boldsymbol{g}(\boldsymbol{\alpha},\boldsymbol{W}).
			\end{aligned}
		\end{equation}

		By taking derivatives of $\mathcal{L}_{(38)}$ and $\mathcal{L}_{(47)}$ with respect to $\boldsymbol{W}$ and $\boldsymbol{\alpha}$, we obtain the stationarity conditions, which are given in \eqref{eq:LW1}–\eqref{eq:LA1} for the original formulation and \eqref{eq:LW2}–\eqref{eq:LA2} for the sum weighed MMSE formulation. Here, we define $\boldsymbol{M} \triangleq \boldsymbol{\Theta}\boldsymbol{G}$ and $\boldsymbol{B}\triangleq \boldsymbol{M}^H\boldsymbol{R}_H\boldsymbol{M}$. The derivatives $\frac{\partial(\boldsymbol{\mu}^{T}\mathbf{g}(\boldsymbol{\alpha},\boldsymbol{W}))}{\partial\boldsymbol{w}_{k}}$, $f_{\rho}^{\prime}(\alpha_k) \triangleq \frac{\partial f_{\rho}(\alpha_k)}{\partial \alpha_k}$ and $\frac{\partial(\boldsymbol{\mu}^T\boldsymbol{g}(\boldsymbol{\alpha},\boldsymbol{W}))}{\partial\alpha_k}$ are kept implicit since they do not affect the subsequent comparative analysis. A key observation is that, when the weight matrices are chosen as
		\begin{equation}\label{eq:48}
			\Lambda_{{C},k}=\left(e_{C,k}^\mathrm{MMSE}\right)^{-1},\;\boldsymbol{\Lambda}_S=\left(\boldsymbol{E}_{S}^{\mathrm{MMSE}}\right)^{-1},
		\end{equation}
		the gradients in \eqref{eq:LW1} and \eqref{eq:LW2}, as well as in \eqref{eq:LA1} and \eqref{eq:LA2}, coincide. Moreover, differentiating with respect to $\boldsymbol{\mu}$ yields identical feasibility conditions for both problems. This means that any point satisfying the KKT conditions of one problem also satisfies the KKT conditions of the other. Therefore, the two problems share the same set of first-order stationary points, establishing their equivalence from the perspective of first-order optimality. \hfill $\blacksquare$
	\end{proof}

	Thanks to Proposition 2, the original non-convex problem \eqref{eq:29} is transformed into a structured convex optimization problem which can be iteratively solved by alternately updating each variable block while fixing the others. In particular, the receiving combiners of both the communication and sensing sides directly follow the MMSE forms as derived in Proof of Proposition 2, with their explicit closed-form solutions provided in \eqref{eq:37} and \eqref{eq:38}, respectively.
	
	With the combiners fixed, the next step is to update the MSE weights. The communication-side MSE weights $\{\Lambda_{C,k}\}$ are updated as
	\begin{equation}\label{eq:49}
		\Lambda_{C,k}=\left(1-\frac{\alpha_k\boldsymbol{w}_k^H\boldsymbol{h}_{C,k}\boldsymbol{h}_{C,k}^H\boldsymbol{w}_k}{\sum_{j=1}^K\alpha_j\boldsymbol{h}_{C,k}^H\boldsymbol{w}_j\boldsymbol{w}_j^H\boldsymbol{h}_{C,k}+4\sigma_C^2}\right)^{-1}.
	\end{equation}
	while the sensing-side MSE weight matrix $\boldsymbol{\Lambda}_S$ is updated as 
	\begin{equation}\label{eq:50}
		\boldsymbol{\Lambda}_S=\left(\boldsymbol{R}_H\left(\boldsymbol{I}_{M}+\frac{1}{4N_R\sigma_S^2}\boldsymbol{V}_{S}\boldsymbol{s}\boldsymbol{s}^H\boldsymbol{V}_S^H\boldsymbol{R}_H\right)^{-1}\right)^{-1}.
	\end{equation}
	
		\begin{table*}[!b]
		\normalsize  
		\centering
		\hrule	\smallskip
		\begin{IEEEeqnarray}{rCl} \label{eq:51}
			\min_{\boldsymbol{\alpha},\boldsymbol{W}}\;&\kappa&\sum_{k=1}^K\Lambda_{C,k}\left(\left|\frac{1}{2}u_{C,k}\boldsymbol{h}_{C,k}^H\boldsymbol{w}_k-1\right|^2+\sum_{j\neq k}^K\left|\frac{1}{2}u_{C,k}\boldsymbol{h}_{{C},k}^H\boldsymbol{w}_j\right|^2\right)+(1-\kappa) \operatorname{tr}\left(\boldsymbol{\Lambda}_s\left(\bar{\boldsymbol{M}}^H\boldsymbol{R}_H\bar{\boldsymbol{M}}-\bar{\boldsymbol{M}}^H\boldsymbol{R}_H-\boldsymbol{R}_H\bar{\boldsymbol{M}}\right)\right)\notag\\
			&-&\sum_{k=1}^{K}f_{\rho}(\alpha_{k})\qquad \mathrm{s.t.}\;\eqref{eq27:Zb}, \eqref{eq27:Zc}, \eqref{eq27:Zd}.\IEEEeqnarraynumspace
		\end{IEEEeqnarray}
		\vspace{-0.5cm} 
	\end{table*}
	
	With the remaining variables fixed, the joint optimization of $\boldsymbol{\alpha}$ and $\boldsymbol{W}$ reduces to solving
	
	\begin{subequations}\label{eq:52}
		\begin{align}
			\min_{\boldsymbol{\alpha},\boldsymbol{W}}\;&\kappa\sum_{k=1}^K\Lambda_{C,k}e_{C,k}+(1-\kappa)\operatorname{tr}(\boldsymbol{\Lambda}_S\boldsymbol{E}_S)-\sum_{k=1}^Kf_\rho(\alpha_k)\label{eq37:Za}\\
			\mathrm{s.t.}\;&\eqref{eq27:Zb}, \eqref{eq27:Zc}, \eqref{eq27:Zd}. \label{eq37:Zb}
		\end{align}
	\end{subequations}
	
	Since the term $-f_\rho(\alpha_k)$ in the objective function is concave in $\alpha_k$, we linearize it at the current iteration via first-order Taylor expansion \cite{ref36}. Specifically, at the $p$-th iteration, we approximate
	\begin{equation}\label{eq:53}
		f_\rho(\alpha_k)\approx f_\rho\left(\alpha_k^{(p-1)}\right)+\left(\alpha_k-\alpha_k^{(p-1)}\right)\cdot\nabla f_\rho\left(\alpha_k^{(p-1)}\right).
	\end{equation}

	Substituting the expressions of $e_{C,k}$ and $\boldsymbol{E}_S$ into the objective function yields \eqref{eq:51}, which is shown at the bottom of the page, where we define $\bar{\boldsymbol{M}} \triangleq \tfrac{1}{2}\boldsymbol{\Theta}\boldsymbol{G}\boldsymbol{W}\boldsymbol{s}\boldsymbol{u}_S^H$ for notational brevity. Now, it turns to be a standard convex optimization problem that can be efficiently solved using CVX. 
	
	\begin{algorithm}[htbp]
		\caption{Joint User Scheduling and Air-Ground Omnidirectional S\&C Algorithm}\label{alg:2}
		\begin{algorithmic}[1]
			\State \textbf{Input:}$M$, $N_T$, $N_R$, $K$, $N_S$, $\boldsymbol{h}_{C,k}^H$, $\boldsymbol{R}_H$, $\boldsymbol{G}$, $P_T$, $\sigma_C^2$, $\sigma_S^2$, $i_{\max}$, $p_{\max}$, $\kappa$, $\rho$, $\epsilon$.
			\State \textbf{Set} outer iteration index $i = 1$, and \textbf{initialize} $\boldsymbol{\alpha}^{(0)}$, $\mathbf{W}^{(0)}$.
			\State \textbf{Repeat}
			\State Fix $\boldsymbol{\alpha}^{(i-1)}$ and $\boldsymbol{W}^{(i-1)}$, update $\boldsymbol{\Theta}^{(i)}$ using the RGA algorithm in \autoref{alg:1} or the SDR algorithm in \autoref{alg:3}.
			\State Set inner iteration index $p=1$.
			\State Fix $\boldsymbol{\Theta}^{(i)}$, update $\boldsymbol{\alpha}^{(i)}$ and $\boldsymbol{W}^{(i)}$:
			\State \hspace{1em} \textbf{Repeat}
			\State \hspace*{1em} (a) Update $u_{C,k}^{(p)}$, $\forall k$, via \eqref{eq:37}.\;	
			\State \hspace*{1em} (b) Update $\boldsymbol{u}_S^{(p)}$ via \eqref{eq:38}.\;
			\State \hspace*{1em} (c) Update $\Lambda_{C,k}^{(p)}$, $\forall k$, via \eqref{eq:49}.\;
			\State \hspace*{1em} (d) Update $\boldsymbol{\Lambda}_S^{(p)}$ via \eqref{eq:50}.\;
			\State \hspace*{0.8em} (e) Update $\boldsymbol{\alpha}^{(p)}$ and $\boldsymbol{W}^{(p)}$ by solving the convex problem \eqref{eq:51} using CVX.\;
			\State \hspace*{1em} (f) $p=p+1$.\;
			\State \hspace*{1em} \textbf{Until} convergence or $p \geq p_{\max}$\;
			\State $\boldsymbol{\alpha}^{(i)}=\boldsymbol{\alpha}^{(p)}$ and $\boldsymbol{W}^{(i)}=\boldsymbol{W}^{(p)}$.\;
			\State $i=i+1$.\;
			\State \textbf{Until} convergence or $i \geq i_{\max}$.
			\State \textbf{Output:} $\boldsymbol{\alpha}^\mathrm{opt}=\boldsymbol{\alpha}^{(i)}$,$\boldsymbol{W}^\mathrm{opt}=\boldsymbol{W}^{(i)}$ and $\boldsymbol{\Theta}^\mathrm{opt}=\boldsymbol{\Theta}^{(i)}$.
		\end{algorithmic}
	\end{algorithm}
	The entire iterative procedure, encompassing both the backward passive coefficient optimization and the joint scheduling and active beamforming design, is summarized in Algorithm \autoref{alg:2}, which we refer to as the joint user scheduling and air-ground omnidirectional (US-AGO) S\&C algorithm. Depending on the backward passive coefficient optimization algorithms employed in Subsection~\ref{subsec:SDR}, we label the algorithm as:
	\begin{itemize}
		\item US-AGO-R: using RGA algorithm, or
		\item US-AGO-S: using SDR algorithm.
	\end{itemize}

			\begin{figure}[t]
				\vspace{-0.4cm} 
				\centering
				\includegraphics[width=0.98\columnwidth]{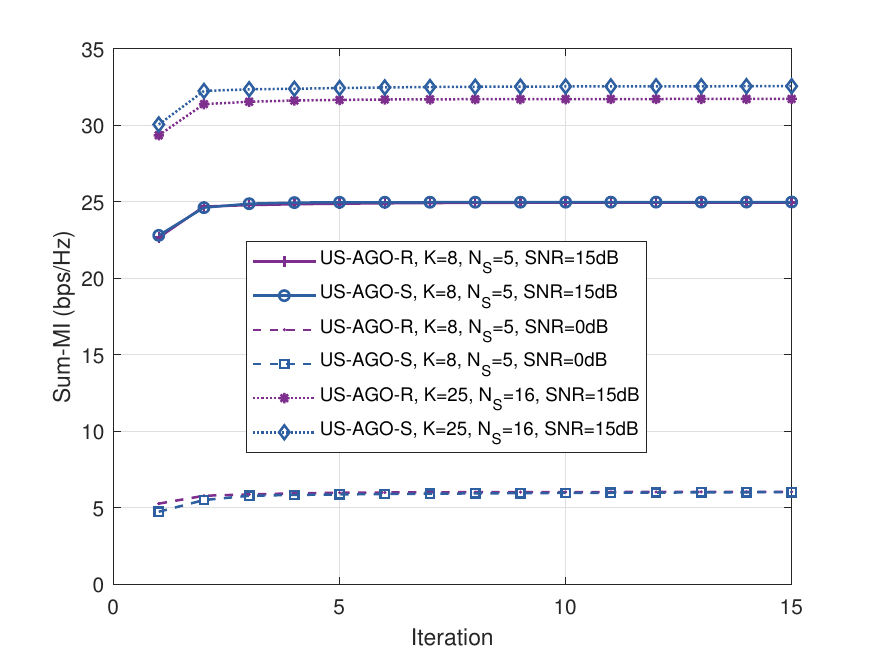} 
				\caption{
					The convergence behavior by the proposed US-AGO-R and US-AGO-S algorithms under various system configurations.
				}
				\label{fig3}
				\vspace{-0.2cm} 
			\end{figure}
			\begin{figure*}[!hb]
				\vspace{-0.4cm} 
				\centering
				\begin{minipage}[t]{0.49\textwidth}
					\centering
					\includegraphics[width=\linewidth]{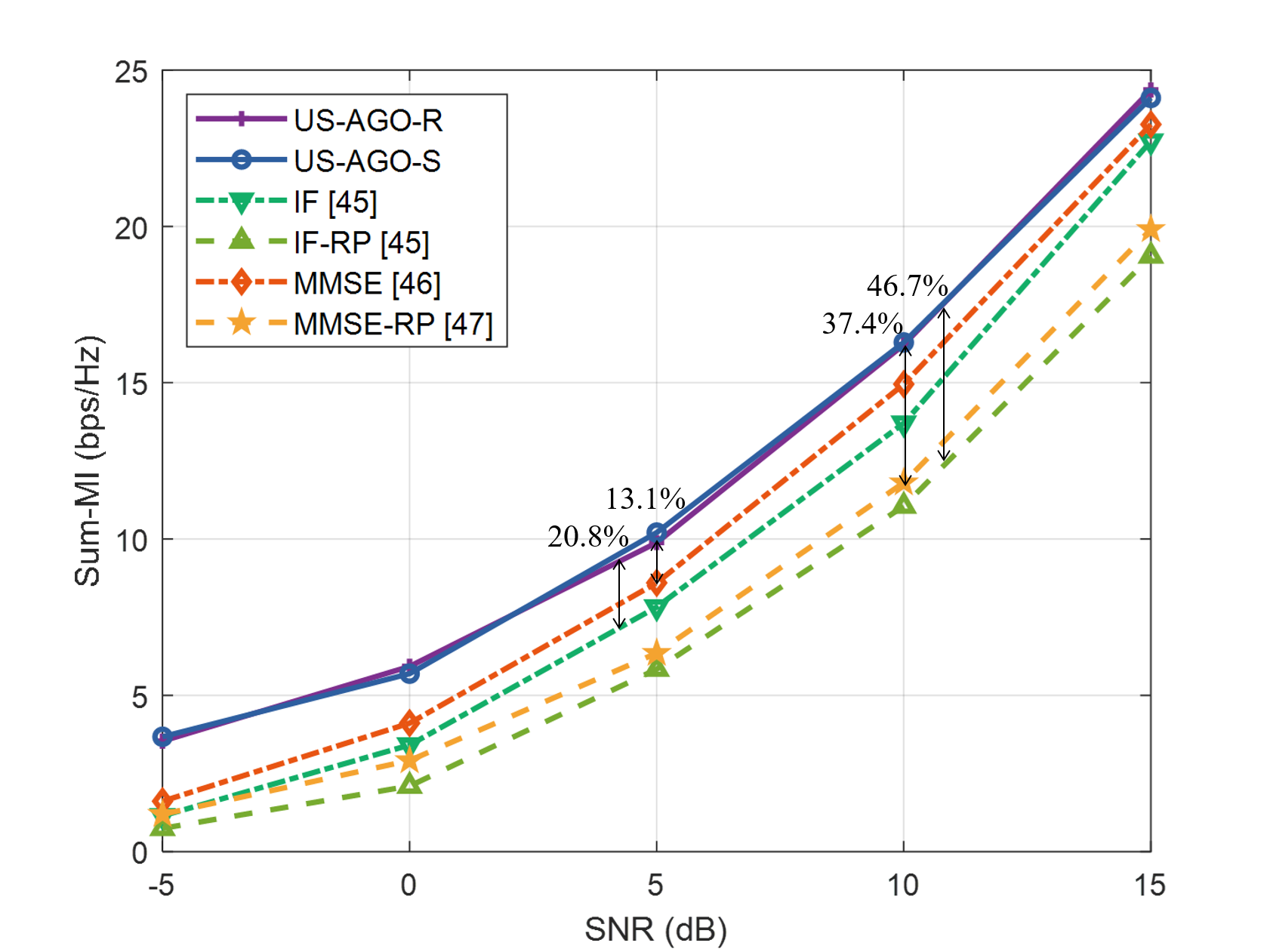}
					\caption{Sum-MI versus SNR by proposed algorithms and baselines.}
					\label{fig4}
				\end{minipage}
				\hfill
				\begin{minipage}[t]{0.49\textwidth}
					\centering
					\includegraphics[width=\linewidth]{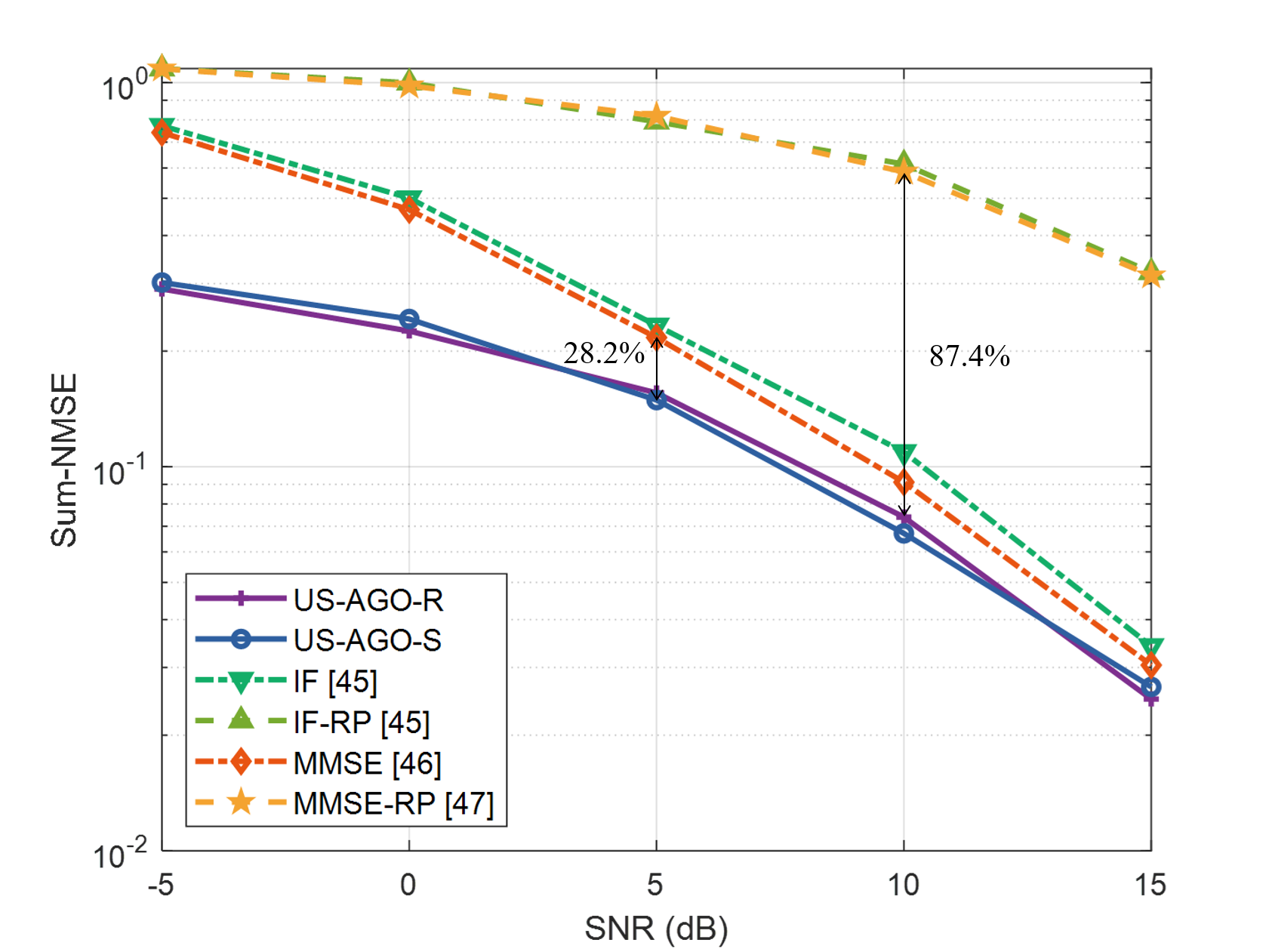}
					\caption{Sum-NMSE versus SNR by proposed algorithms and baselines.}
					\label{fig5}
				\end{minipage}
			\end{figure*}

	\subsection{Complexity Analysis}
	We now analyze the computational complexity of the proposed algorithms, which consists of an outer iterative procedure with two main stages: 
	(i) fixing $\boldsymbol{\alpha}$ and $\boldsymbol{W}$ to update $\boldsymbol{\Theta}$, and (ii) fixing $\boldsymbol{\Theta}$ to update $\boldsymbol{\alpha}$ and $\boldsymbol{W}$. 
			
	For the first stage, if the RGA algorithm is adopted, it updates $\boldsymbol{\Theta}$ on the complex circle manifold with a per-iteration cost of $\mathcal{O}(M^2)$, mainly due to matrix-vector products and projections \cite{ref33}. Assuming convergence within $q_{\max}$ iterations, the total cost is $\mathcal{O}(q_{\max}M^2)$.
	In contrast, the SDR algorithm requires solving an SDP with a Hermitian matrix $\boldsymbol{Q}\in\mathbb{C}^{M\times M}$, which has a worst-case complexity of $\mathcal{O}(M^{4.5}\mathrm{log}\left(1/\epsilon\right))$ under interior-point methods \cite{ref37}, where $\epsilon$ denotes the solution accuracy. The subsequent Gaussian randomization further contributes $\mathcal{O}(M^3+LM^2)$ operations, with $L$ being the number of random samples. Hence, the overall complexity is $\mathcal{O}(M^{4.5}\mathrm{log}\left(1/\epsilon\right)+LM^2)$, which is substantially higher than that of the RGA algorithm.
			
	In the second stage, updating $\boldsymbol{\alpha}$ and $\boldsymbol{W}$ involves an inner loop where the receive combiners $\{u_{C,k}\}$, $\boldsymbol{u}_{S}$ and MSE weights $\{\Lambda_{C,k}\}$, $\boldsymbol{\Lambda}_S$ are updated with respective complexities of $\mathcal{O}(K^2N_T)$, $\mathcal{O}(M^2+MK)$, $\mathcal{O}(K^2N_T)$ and $\mathcal{O}(M^3)$. The most computationally demanding step is solving the joint optimization problem \eqref{eq:39} via CVX, whose complexity scales as $\mathcal{O}((N_TK)^3)$. Given that  $K\ll M$, the overall complexity of this stage is dominated by $\mathcal{O}(j_{\max}((N_TK)^3+M^{3}))$.
			
	Therefore, the total complexity of US-AGO-R is $\mathcal{O}(i_{\max}(q_{\max}M^{2}+j_{\max}((N_{T}K)^{3}+M^{3})))$, whereas that of US-AGO-S is $\mathcal{O}(i_{\max}(M^{4.5}\log\left(\frac{1}{\epsilon}\right)+LM^2+j_{\max}((N_TK)^3+M^3)))$.

	\section{Numerical Results} \label{sec:results}
	In this section, we evaluate the performance of our proposed design. Unless otherwise specified, the default simulation setup considers an ISAC BS equipped with $M=N_T=N_R=16$ elements and $K=8$ CUs. The maximum number of simultaneously scheduled users is set to $N_S=5$. For the communication channel, we assume to have $N_{cl}=4$ clusters and $N_{ray}=10$ rays contained in each cluster. The transmit power is set to $P_T=40 \mathrm{dBm}$, and the carrier frequency is 3 GHz. The weighting factor $\kappa$, the penalty weight $\rho$, and the convergence threshold $\epsilon$ are set to 0.5, 0.1 and 0.01, respectively. The following benchmark algorithms are used:

	1) Interference free (IF) \cite{ref38}: In the joint scheduling and active beamforming stage, users are scheduled randomly, and the active beamformer is designed to eliminate inter-user interference by enforcing orthogonality in the effective channel matrix across scheduled users. Then, the backward passive coefficient matrix is optimized using the proposed algorithms.
			
	2) Interference free random phase (IF-RP) \cite{ref38}: Random phase shifts are applied for backward passive coefficients. The joint scheduling and active beamforming follow the same strategy as IF.
			
	3) MMSE \cite{ref39}: The backward passive coefficient matrix is optimized using the proposed algorithms, while user scheduling is performed randomly, and the active beamformer is designed to balance signal enhancement and interference suppression by minimizing the total MSE between transmitted and received signals.
			
	4) MMSE-RP \cite{ref40}: Random phase shifts are applied for backward passive coefficients. The joint scheduling and active beamforming strategy is the same as in MMSE.

	Fig.~\ref{fig3} illustrates the convergence performance of the proposed US-AGO-R and US-AGO-S algorithms under different SNR and user load conditions. It can be observed that both US-AGO-R and US-AGO-S converge rapidly within approximately 5 iterations in all considered scenarios, demonstrating their efficiency and suitability	for practical deployment.

		    \begin{figure*}[!hb]  
				\vspace{-0.4cm} 
				\centering
				\begin{subfigure}[t]{0.5\textwidth}
					\centering
					\includegraphics[width=\textwidth]{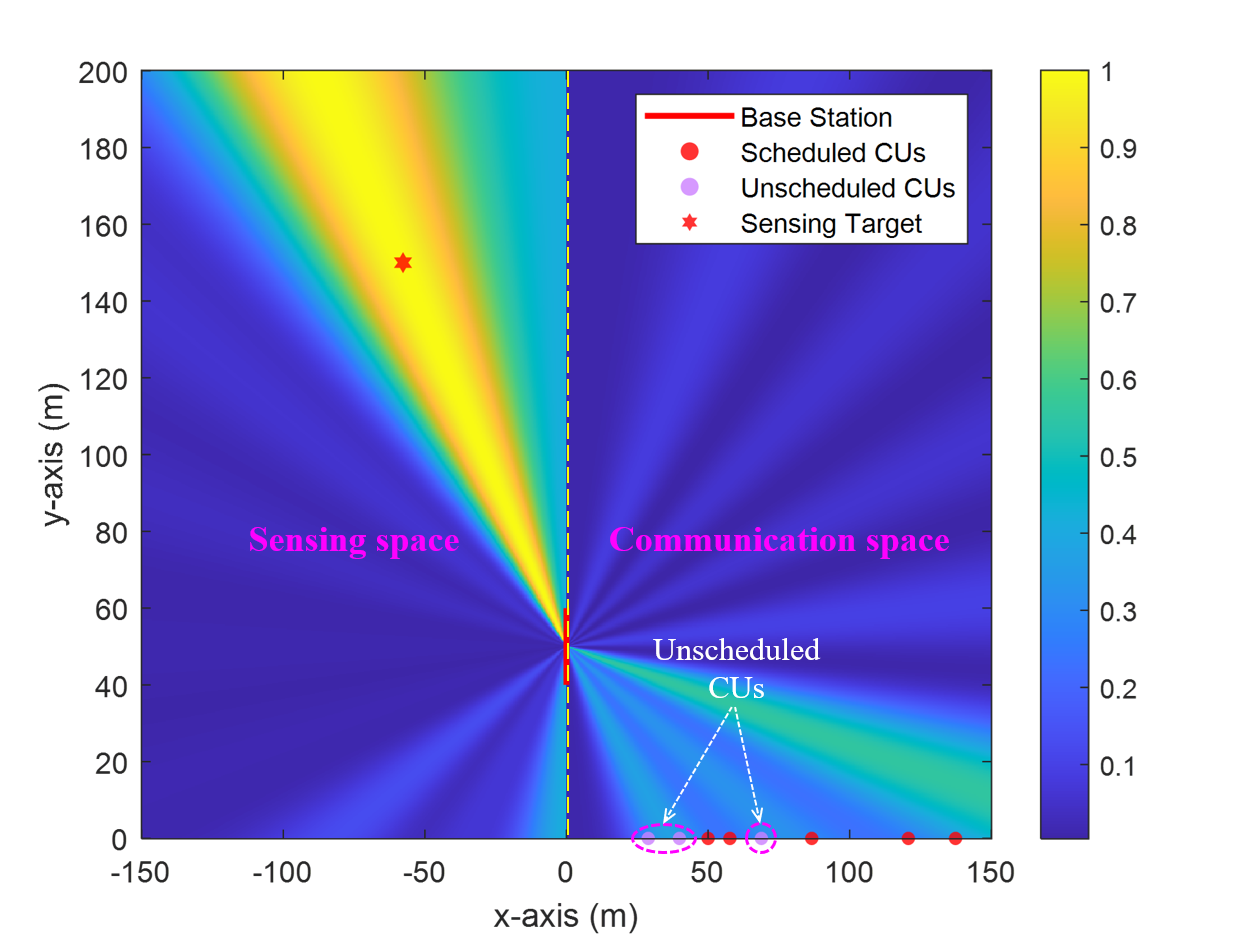}
					\caption{2D beampattern ($M=N_T=N_R=8$).}
					\label{fig:6a}
				\end{subfigure}
				\hfill
				\begin{subfigure}[t]{0.48\textwidth}
					\centering
					\includegraphics[width=\textwidth]{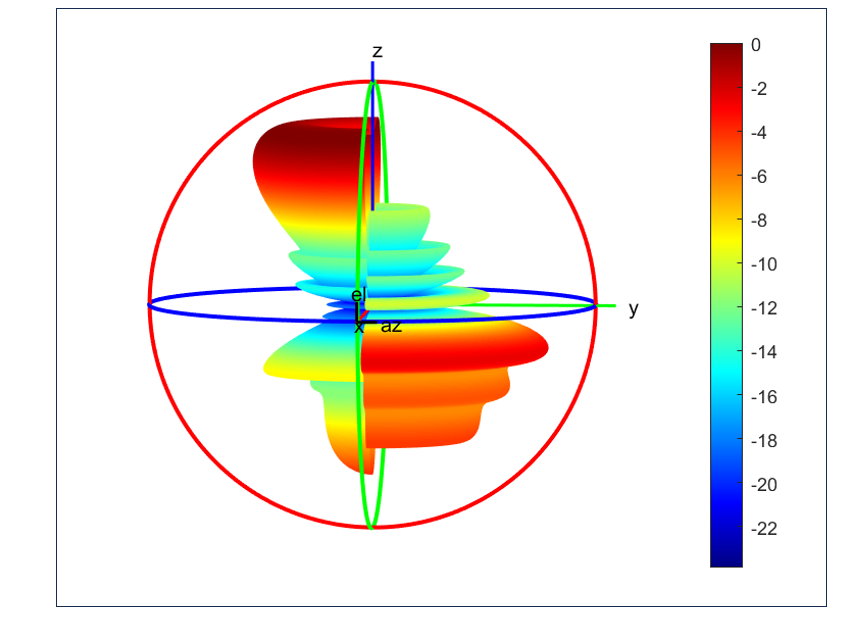}
					\caption{3D beampattern ($M=N_T=N_R=8$).}
					\label{fig:6b}
				\end{subfigure}
				\begin{subfigure}[t]{0.5\textwidth}
					\centering
					\includegraphics[width=\textwidth]{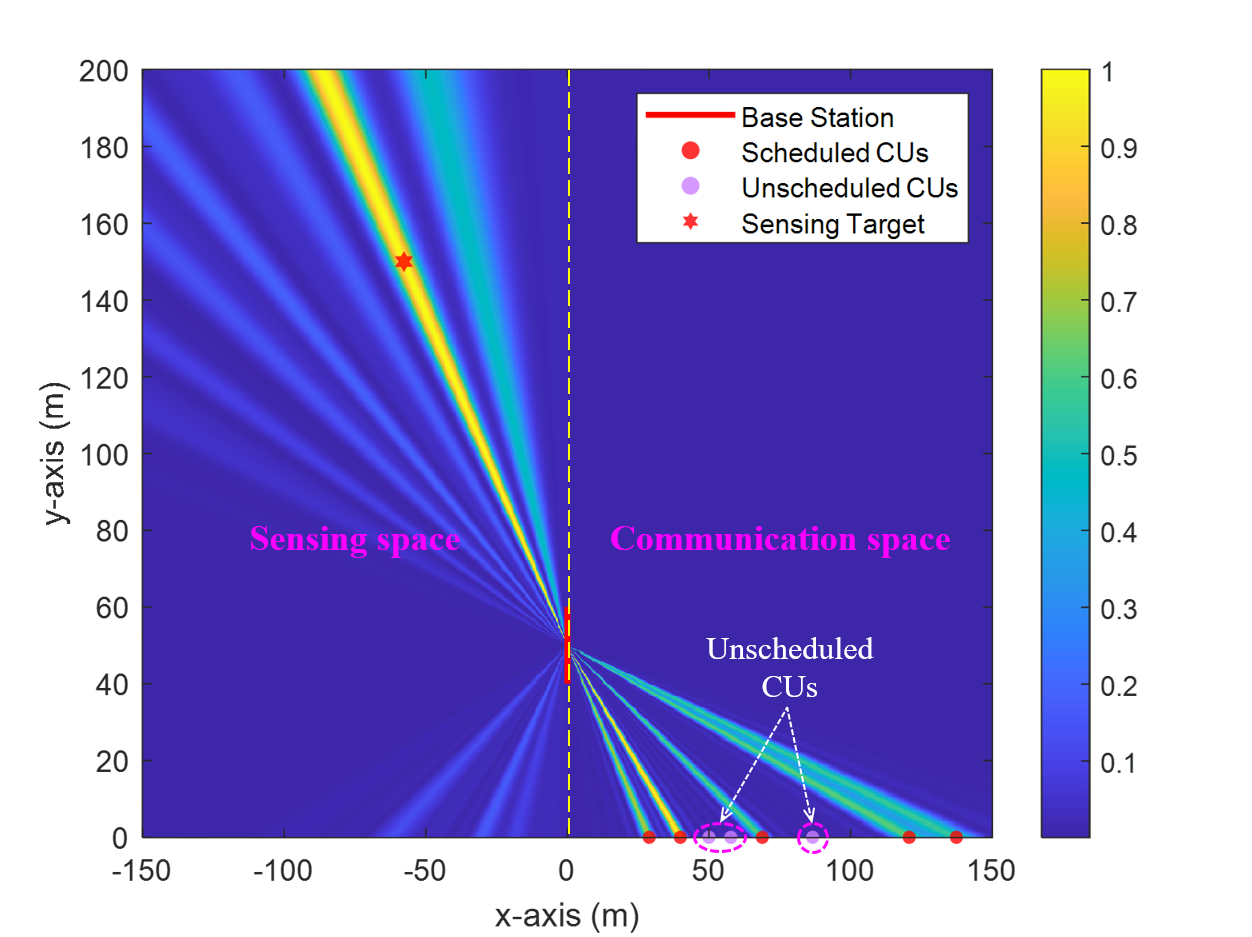}
					\caption{2D beampattern ($M=N_T=N_R=64$).}
					\label{fig:6c}
				\end{subfigure}
				\hfill
				\begin{subfigure}[t]{0.48\textwidth}
					\centering
					\includegraphics[width=\textwidth]{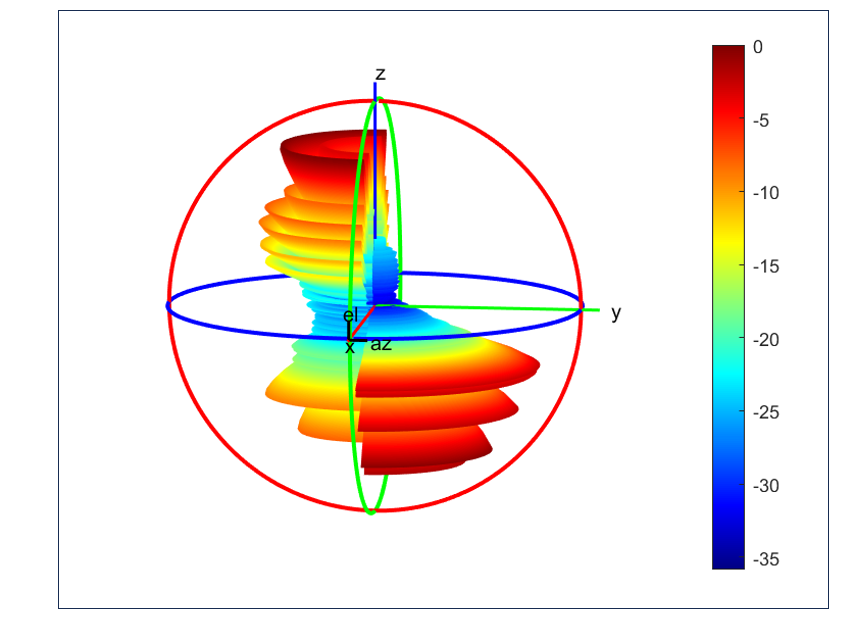}
					\caption{3D beampattern ($M=N_T=N_R=64$).}
					\label{fig:6d}
				\end{subfigure}
				\caption{Normalized 2D (x–y plane, linear scale) and 3D (azimuth/elevation, dB scale) beampatterns of the proposed air-ground OmniSteering antenna with $K=8$, $N_S=5$. (a)-(b) Results for $M=N_T=N_R=8$; (c)-(d) Results for $M=N_T=N_R=64$.}
				\label{fig6}
				\vspace{-0.4cm} 
			\end{figure*}
			
	Fig.~\ref{fig4} illustrates the system sum-MI performance under various SNR levels. The proposed US-AGO-R and US-AGO-S algorithms consistently achieve the best performance across all SNR conditions. By comparing the schemes with optimized backward passive coefficient matrix to their random-phase counterparts (i.e., -RP variants), a clear performance gap can be observed. This highlights the considerable benefit brought by backward passive coefficient optimization, further demonstrating its critical role in enhancing sensing performance. To quantify these improvements, we observe that at 5 dB, the proposed algorithms outperform MMSE and IF by 13.1\% and 20.8\%, respectively, demonstrating a notable advantage in the low-SNR regime. At 10 dB, the performance gap becomes even more significant, with the proposed algorithms surpassing MMSE-RP and IF-RP by 37.4\% and 46.7\%, respectively. These results validate the substantial performance gains achieved by the proposed design, particularly when compared to both conventional and random-phase benchmark schemes.

	In Fig.~\ref{fig5}, we evaluate the sum-NMSE performance of the proposed and baseline algorithms across different SNR levels. Before proceeding with the analysis, we define the sum-NMSE metric as the sum of the normalized mean square error in estimating the transmitted signal (communication task) and that in estimating the echoed channel (sensing task), given by
			\begin{equation}
				\mathrm{sum-NMSE}=\underbrace{\frac{\mathbb{E}[\|\hat{\boldsymbol{s}}-\boldsymbol{s}\|^2]}{\mathbb{E}[\|\boldsymbol{s}\|^2]}}_{\text{Communication NMSE}}+\underbrace{\frac{\mathbb{E}[\|\hat{\boldsymbol{H}}_S^H-\boldsymbol{H}_S^H\|^2]}{\mathbb{E}[\|\boldsymbol{H}_S^H\|^2]}}_{\text{Sensing NMSE}}.
			\end{equation}

	It shows that the proposed US-AGO-R and US-AGO-S algorithms consistently achieve the lowest sum-NMSE across the entire SNR regime. Meanwhile, the MMSE and IF algorithms exhibit similar performance to each other, as do their random-phase variants (i.e., MMSE-RP and IF-RP), though all with noticeably higher NMSE levels. The performance advantage of the proposed algorithms over MMSE and IF is particularly evident in the low-SNR regime, where they achieve the a sum-NMSE reduction of approximately 20.8\% at 5 dB. As the SNR increases, their advantage over the random-phase baselines (MMSE-RP and IF-RP) becomes more pronounced, reaching up to 87.4\% at 10 dB. These results further validate the effectiveness of the proposed algorithms in improving system performance under the sum-NMSE metric. And the consistency of conclusions drawn from both Fig.~\ref{fig4} and Fig.~\ref{fig5} provides additional evidence supporting the use of MI as a unified performance metric for ISAC.

	Fig.~\ref{fig6} depicts the normalized 2D and 3D beampatterns of the proposed air-ground OmniSteering antenna, clearly demonstrating its dual-functional beamforming capability. For both configurations, the BS is placed at (0, 50)~m, with eight CUs randomly distributed along the vertical line segment from (0, 0) to (0, 150)~m in the right half-plane (communication space). A single sensing target is located at $60^{\circ}$ elevation in the left half-plane (sensing space), where the omni-steering plate operates in T-mode.

	In Fig.~\ref{fig6}(a)-(b), a small-scale array with $M=N_T=N_R=8$ is employed. The 2D beampattern shown in Fig.~\ref{fig6}(a) reveals that the sensing beam is precisely directed toward the designated target, exhibiting a dominant mainlobe. Nevertheless, the limited aperture results in a relatively wide mainlobe, leading to poor angular resolution and reduced ability to distinguish closely spaced targets. In the communication space, our proposed algorithms successfully schedule five users according to the design requirements. There are three prominent beams pointing towards the ground users, which corresponds to the 3D beampattern in Fig.~\ref{fig6}(b).

	Fig.~\ref{fig6}(c)-(d) show the results for a large-scale configuration where $M=N_T=N_R=64$. The 2D beampattern in Fig.~\ref{fig6}(c) shows that five of eight CUs are successfully served with distinct and narrow beams, demonstrating both the effectiveness of the proposed scheduling algorithm and the substantially enhanced communication capability compared to Fig.~\ref{fig6}(a). In the sensing space, the mainlobe is more sharply focused on the target, achieving much higher angular resolution. The corresponding 3D beampattern in Fig.~\ref{fig6}(d) offers a more comprehensive view of the large-scale array’s beamforming performance, where the well-separated beams and suppressed sidelobes highlight the effectiveness of the design.

			\begin{figure}[t]
				\centering
				\includegraphics[width=0.98\columnwidth]{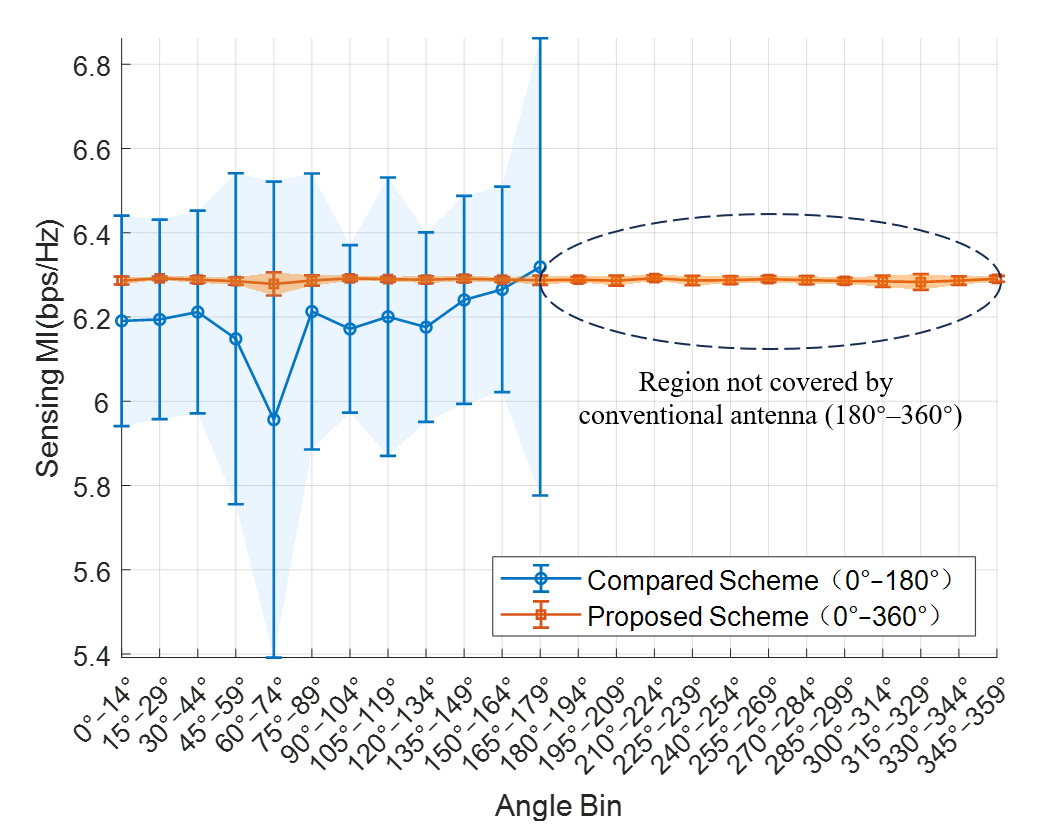} 
				\caption{
					Sensing MI versus angle bin.}
				\label{fig8}
				\vspace{-0.4cm} 
			\end{figure}
			
	Finally, to demonstrate the coverage advantage of the proposed antenna architecture, we evaluate the sensing MI performance across different angular bins, as shown in Fig.~\ref{fig8}\footnote{In this figure, each marker indicates the average sensing MI within the corresponding angular bin, while the associated bars represent the performance fluctuation, reflecting the stability of the sensing capability.}. A single-user scenario is considered to clearly highlight the sensing coverage capability. In the proposed scheme, the BS equipped with the air-ground OmniSteering antenna supports omnidirectional sensing over the entire $0^{\circ}-360^{\circ}$ angular space. In contrast, the conventional forward-facing ISAC antenna is inherently limited to front-side sensing coverage ($0^{\circ}-180^{\circ}$). Importantly, the proposed scheme maintains reliable sensing in all regions. It is attributed to the pre-determined power splitting strategy for backlobe and forwardlobe. These results clearly validate the proposed antenna’s ability to extend sensing coverage while ensuring robust and stable performance, effectively addressing the blind-zone limitation of conventional designs.

	\section{Conclusion} \label{sec:conclusion}
	In this work, a novel air-ground OmniSteering antenna structure is designed, where the conventional backlobe reflector is replaced by an omni-steering plate that adaptively modulates the backlobe, thereby overcoming coverage and beamforming limitations and enabling full-space services. Focusing on dense communication scenarios, we formulated a sum-MI maximization problem to jointly optimize user scheduling, active array beamforming, and passive coefficients of the omni-steering plate. The backward passive coefficient matrix is solved via low-complexity RGA or SDR, and the joint scheduling and active beamforming step is transformed into a tractable sum weighted MMSE problem for efficient iterative optimization. Numerical simulations show that our algorithm considerably outperforms the benchmarks, achieving up to 46.7\% improvement in sum-MI and 87.4\% reduction in sum-NMSE, while beampatterns confirm effective beam alignment and user scheduling performance. Moreover, the proposed antenna design is validated to enhance both coverage and sensing capability. The effectiveness of the proposed ISAC framework and highlights its potential for low-altitude economy applications.

		\end{document}